\numberwithin{equation}{section}
\newtheorem{theorem}{Theorem}
\newtheorem*{theorem*}{Theorem}
\newtheorem{proposition}{Proposition}
\newtheorem{lemma}{Lemma}
\theoremstyle{definition}
\newtheorem*{definition}{Definition}{}
\theoremstyle{remark} 
\newtheorem{remark}{Remark}
\newcommand{\field}[1]{\ensuremath{\mathbb{#1}}}
\newcommand{\CC}{\field{C}}
\newcommand{\RR}{\field{R}}
\newcommand{\ZZ}{\field{Z}}
\DeclareMathOperator{\Tr}{\mathrm{Tr}}
\DeclareMathOperator{\im}{Im}
\newcommand{\vp}{\varphi}
\newcommand{\beq}{\begin{equation}\begin{aligned}}
\newcommand{\eeq}{\end{aligned}\end{equation}}
\newcommand{\al}{\alpha}
\newcommand{\be}{\beta}
\newcommand{\del}{\delta}
\newcommand{\pa}{\partial}
\newcommand{\la}{\langle}
\newcommand{\ra}{\rangle}
\newcommand{\ov}{\over}
\newcommand{\curly}[1]{\mathscr{#1}}
\newcommand{\cD}{\curly{D}}
\newcommand{\cH}{\curly{H}}
\newcommand{\cL}{\curly{L}}
\newcommand{\LL}{\mathcal{L}}
\newcommand{\ga}{\gamma}
\newcommand{\ex}[1]{\langle #1 \rangle}
\newcommand{\Pf}[1]{\text{Pf}( #1 )}
\begin{document}
\title{Supersymmetry and trace formulas I. Compact Lie groups}
\author{Changha Choi}
\address{Simons Center for Geometry and Physics, Stony Brook University, Stony Brook, NY 11794, USA}
\author{ Leon A. Takhtajan}
\address{Department of Mathematics,
Stony Brook University, Stony Brook, NY 11794 USA; 
\newline
Euler International Mathematical Institute, Pesochnaya Nab. 10, Saint Petersburg 197022 Russia}
\begin{abstract}
In the context of supersymmetric quantum mechanics we formulate new supersymmetric localization principle, with application to trace formulas for a full thermal partition function. Unlike the standard localization principle, this new principle allows to compute the supertrace of non-supersymmetric observables, and is based on the existence of fermionic zero modes. We describe corresponding new invariant supersymmetric deformations of the path integral; they differ from the standard deformations arising from the circle action and require higher derivatives terms. Consequently, we prove that the path integral localizes to periodic orbits and not only on constant ones. We illustrate the principle by deriving bosonic trace formulas on compact Lie groups, including classical Jacobi inversion formula.

\end{abstract}
\keywords{}
\subjclass[2020]{}
\maketitle
\tableofcontents

\section{Introduction}
 
Supersymmetry, a global symmetry
between bosons and fermions, provides invaluable insights to the non-perturbative aspects of general strongly coupled quantum field theories, and is deeply related to various areas of mathematics. This was elucidated by E. Witten in papers \cite{Witten:1982df} and  \cite{Witten:1982im}, written from physics and mathematics perspective.

Specifically, the Hilbert space $\cH=\cH^{+}\oplus\cH^{-}$ of  a supersymmetric quantum theory is graded by a fermion number and has an operator $(-1)^{F}$, which counts the number of fermions modulo two.
The precise non-perturbative information about the ground states of a supersymmetric quantum Hamiltonian\footnote{We use notation $\hat{H}$ to distinguish from the classical Hamiltonian $H$.} $\hat{H}$ is given by the Witten index
$$I=\mathrm{Str} \,e^{-\be \hat{H}}=\Tr (-1)^{F}e^{-\be \hat{H}},$$
which mathematically is related to the index of the Dirac operator associated with the supersymmetry algebra.

Remarkably, this interpretation led to a beautiful derivation of the Atiyah-Singer formula for the index of a Dirac operator using supersymmetric quantum mechanics \cite{Alvarez-Gaume:1983zxc}. In this context, the index of a Dirac operator is the Witten index, represented as a path integral for a certain supersymmetric sigma model. Since the index does not depend on the parameter $\be$ (playing the role of inverse temperature), in the high-temperature limit $\be\rightarrow 0$, the path integral localizes to the constant maps and becomes a finite-dimensional integral representing the $\hat{A}$\,-genus of a corresponding spin manifold. As it was eloquently explained by Atiyah \cite{atiyah1985circular}, this approach admits a natural equivariant cohomological interpretation as an infinite-dimensional version of the Duistermaat-Heckman formula for the loop space of a spin manifold, where the fixed points of the circle action are constant loops. This is a paradigmatic example of localization of a path integral, where the original infinite dimensional path integral reduces to a finite dimensional integral around the fixed points.

Subsequently, the idea of localization was extended to quantum field theories and led to many discoveries. To name a few, earlier important milestones of localizations were on topological and cohomological quantum field theories: computation of Donaldson invariant from a twisted four-dimensional supersymmetric gauge theory \cite{Witten:1988ze}, exact path integrals on 2d Yang-Mills theories \cite{Witten:1992xu} (see review \cite{Cordes:1994fc}), and the computation of Seiberg-Witten prepotential \cite{Nekrasov:2002qd} building upon \cite{ Losev:1997tp,Moore:1997dj}. More recently, the localization has been successfully applied to non-topological supersymmetric field theories, starting from four dimensional $N=2$ gauge theories \cite{Pestun:2007rz}, which then has been extended to three dimensions \cite{Kapustin:2009kz} and two dimensions \cite{Doroud:2012xw, Benini:2012ui,Benini:2013xpa}, see \cite{Pestun:2016zxk} for more comprehensive reviews. These discoveries confirm the common belief that localization of a supersymmetric path integral is always a computation of the supertrace and not of the trace.

Indeed, in all these examples the localization has been applied to `supersymmetric' observables, which are invariant under at least a single supercharge of the theory. However, it is clear that supersymmetric observables alone do not characterize the full physical theory. Therefore, a somewhat unorthodox question is the following: can one apply localization to the computation of a non-supersymmetric observable? In this paper, we focus on quantum mechanics and find that the answer is positive by extending the applicability of localization to a certain class of non-supersymmetric observables. Here we show that for some specific theories, it is possible to extend the localization principle to non-supersymmetric observables used to compute the partition function. 

Namely, a fundamental object of  a quantum theory is the partition function of a system,
$$Z(\beta)=\Tr e^{-\be \hat{H}},$$
whose dependence on the inverse temperature $\be$ is highly non-trivial. In the spectral geometry one studies $Z(\be)$ in case $\hat{H}=\frac{1}{2}\Delta$, the Laplace operator on a compact
Riemannian manifold $M$, acting on the Hilbert space $L^{2}(M)$ of square-integrable functions with respect to the Riemannian volume form on $M$. By definition,
$$Z(\beta)=\sum_{n=0}^{\infty}e^{-\be\lambda_{n}/2},$$
where $\lambda_{n}$ are the eigenvalues of $\Delta$. In general, only asymptotics of $Z(\be)=\Tr e^{-\frac{1}{2}\be\Delta}$ and of the corresponding heat kernel are known as $\be\to 0$, and no closed expression for $Z(\be)$ exists in geometric terms.

In special cases when $M$ is either a flat torus or a hyperbolic Riemann surface, the partition function $Z(\be)=\Tr e^{-\frac{1}{2}\be\Delta}$ can be computed exactly in terms of the underlying geometry of $M$. Thus in  case $M=S^{1}$ classical Jacobi inversion formula for the theta-series \cite{jacobi1828} represents $Z(\beta)$ as a sum over closed geodesics of the flat metric on $S^{1}$. In case $M$ is a hyperbolic compact Riemann surface, the corresponding result is the celebrated Selberg trace formula \cite{selberg1956harmonic}, which expresses $Z(\beta)$ as a sum over closed geodesics of the hyperbolic metric on $M$. These formulas can be thought of as a manifestation of the  principle ``Spectral trace = Matrix trace'' in the abelian and non-abelian settings.  An easier case is an explicit formula for the heat kernel of a Laplace operator for the bi-invariant metric on a compact semi-simple Lie group $G$, first obtained by L.D. Eskin \cite{MR0206535}, and later rediscovered and used by many authors \cite{Schulman:1968yv,dowker1970sum,Dowker:1970vu,marinov1979dynamics,Frenkel1984,Picken:1988ev,Camporesi:1990wm}.

One may ask when the matrix trace can be exactly computed by the path integral for a quantum particle on a Riemannian manifold $M$. This is obviously so in case of $S^{1}$, since
the path integral is Gaussian and can be computed exactly. In case when $M=G$, a compact simple Lie group with a bi-invariant metric, M.S. Marinov and M.V. Terentyev \cite{marinov1979dynamics} considered a semi-classical approximation of the path integral for a free quantum particle on $G$ and obtained the Eskin formula for the heat kernel, generalizing the observations of the exactness of semi-classical approximation in $SU(2)$ and $SU(N)$ cases by L. Schulman and J.S. Dowker \cite{Schulman:1968yv,Dowker:1970vu}.   It was indicated by R.F. Picken \cite{Picken:1988ev}  that this remarkable property of purely bosonic path integral on $G$ is related to the infinite-dimensional Duistermaat-Heckman formalism, outlined by Atiyah \cite{atiyah1985circular}.  However,  all these papers use an ad hoc addition of the so-called DeWitt term \cite{dewitt1957dynamical}, a `notorious' quantum correction to the Lagrangian for the path integral on curved spaces. Further, M. Gutzwiller \cite{gutzwiller1980classical} observed that when $M$ is a hyperbolic Riemann surface, the critical points of the path integral in the semi-classical approximation are closed geodesics; the resulting formula, called `Gutzwiller trace formula', is quite similar to the Selberg trace formula. However, Gutzwiller trace formula does not adequately reproduce the contribution of the identity element to the Selberg trace formula, and also uses an ad hoc addition of the DeWitt term.

Naturally, this calls for a question of whether there is a new localization principle that allows to compute pure bosonic partition function $Z(\be)$ by localizing some supersymmetric path integral to the closed geodesics. At first glance, this sounds rather counter-intuitive since, unlike the Witten index $I$,  $Z(\be)$ non-trivially depends on $\be$, so the standard localization principle does not apply.
Nevertheless, here we introduce a new principle of supersymmetric localization,  and use it to obtain trace formulas for $S^1$ and $G$ by localizing on closed geodesics.  
Note that in this examples we have $I=0$, and it is rather amusing that this vanishing of the Witten index provides a key for the answer! 

Namely, consider the supersymmetric Lagrangian $\mathcal{L}$ with the action $S$, in which fermion degrees of freedom totally decouple, so the Hamiltonian $H$ is purely bosonic.
Then the fermion part (assuming that it is contains only first time derivatives) has zero modes and the Witten index $I$, the supersymmetric path integral with periodic boundary conditions, is zero. We saturate the fermion
zero modes $\chi^{1},\dots,\chi^{n}$ by considering a following trace
$$I(\be)=\Tr\,\hat \chi^{1}\cdots \hat\chi^{n}(-1)^{F}e^{-\be \hat{H}},$$
which is non-zero.  If, in addition, the operator $\hat\chi^{1}\cdots\hat\chi^{n}$ is proportional to $(-1)^{F}$, then the index with zero modes insertion $I(\be)$ is equal (up to an overall numerical factor) to the partition function $Z(\beta)$! 
However, due to the presence of zero modes the functional $\chi^{1}\cdots\chi^{n}e^{-S_{E}}$, where $S_{E}$ is the Euclidean action, is no longer invariant under supersymmetry transformation, 
$\del(\chi^{1}\cdots\chi^{n}e^{-S_{E}})\neq 0$.  Thus the path integral of  $\chi^{1}\cdots\chi^{n}e^{-S_{E}+s\del V}$,  where $V$ is standard supersymmetric deformation associated with the circle action, is no longer $s$-independent (otherwise it would localize on constant loops).  However, in many cases there is another deformation $V$, containing higher derivatives of the fields,  such that  the path integral of $\chi^{1}\cdots\chi^{n}e^{-S+s V}$ is $s$-independent and in the limit $s\to\infty$ localizes on the closed geodesics!

The present paper was  greatly influenced by the work  of J.-M. Bismut in \cite{bismut2005hypoelliptic,bismut2008hypoelliptic,bismut2011hypoelliptic}, and it is illuminating to compare our approaches. Thus Bismut uses  hypoelliptic deformation on the cotangent bundle of the manifold that smoothly interpolates between the Laplacian and the geodesic flow; as it is clearly explained in \cite{bismut2005hypoelliptic}, such hypoelliptic deformation is a generalization of Witten's twist in Morse theory \cite{Witten:1982im}. Henceforth such approach can be thought of as a generalization of the Hamiltonian approach to supersymmetry, and one needs to find its physics interpretation. On the other hand, our approach is purely Lagrangian and extends the equivariant cohomology used for supersymmetric path integrals.

The content of the paper is the following. In Section \ref{subsec:slp} we recall standard basic facts on supersymmetric localization in the Hamiltonian  (using equivariant cohomology) and in the Lagrangian (using path integral) approaches. In Section \ref{new-susy-loc} we explicitly formulate our new localization principle. Namely, we start with elementary finite-dimensional Lemma  \ref{loc-fd} and carefully list all necessary conditions. They are based on the zero Witten index and assume that the system has fermion zero modes satisfying conditions (i)--(ii). Conditions (A)--(B) represent requirements on the deformation $V$, necessary for the Proposition \ref{new-loc} on the independence of the path integral of the parameter $s$. 

In Section \ref{sec:S1} we use Proposition \ref{new-loc}  to evaluate the partition function $Z(\be)$ on the circle $S^{1}$ with the flat metric using supersymmetric localization, which yields classical Jacobi
inversion formula for theta-series. In Section  \ref{sec:G} we derive the Eskin trace formula on the compact semi-simple Lie group $G$. Specifically, in Section \ref{SUSY-G} we present necessary facts about supersymmetric
particle on $G$, relegating standard details to Appendices  \ref{appsec:g}, \ref{appsec:symp} and \ref{free on G}. Finally, in Section \ref{E on G} we prove Theorem \ref{E-TF}, which gives a supersymmetric derivation of the Eskin trace formula on $G$.

\subsection{Acknowledgments} The first author (C.C.) thanks M. Dedushenko, Z. Komargodski, M. Mezei, and  M. Ro\v{c}ek for discussions and comments. C.C. is supported in part by the Simons Foundation grant 488657 (Simons Collaboration on the Non-Perturbative Bootstrap). The second author (L.T.) thanks J.-M. Bismut for the discussion at the conference ``Integrability, Anomalies and Quantum Field Theory'' in the IHES in February 2020, which stimulated the present paper.

\section{General Remarks on New Localization Principle} \label{sec:gen}

\subsection{Standard Localization Principle} \label{subsec:slp} We start by briefly recalling a finite-dimensional localization principle. Let $M$ be compact orientable $n$-dimensional manifold with an action  of the abelian group  $\mathrm{U}(1)=S^{1}$, and let $v$ the vector field corresponding to this action. The corresponding equivariant differential is
\beq\label{D-eq}
D = d- \iota_v
\eeq
where $\iota_{v}$ is the inner product operator with $v$. It satisfies
$$D^2=-\cL_v,$$
where $\cL_{v}$ stands for the Lie derivative, and is a differential in the subcomplex 
$\Omega^{\bullet}_{S^{1}}(M)$ of the complex $\Omega^{\bullet}(M)$, consisting of $S^{1}$-invariant differential forms on $M$; $ \al\in  \Omega^\bullet_{S^{1}}(M)$ if $\cL_v \al=0$.

Let $\al\in\Omega^{\bullet}(M)$ be an equivariantly closed form, $D\al=0$. The localization principle is the statement that for every $V\in\Omega^{1}_{S^{1}}(M)$ and $s\in\RR$ we have
\beq \label{eq:wi}
\int_M \al=\int_M \al \wedge e^{-s DV}.
\eeq
By a suitable choice of $V$ one can make $(DV)_{0}$, a component of $DV$ in $\Omega^{0}(M)$, positive semi-definite, so in the limit $s \rightarrow \infty$ the integral $\int_M \al$ localizes onto the zero loci of $(DV)_{0}$. In case when $M$ is even-dimensional and the circle action has only isolated fixed points, one gets Berline-Vergne localization formula \cite{berline2003heat}. 

The proof of \eqref{eq:wi} is very simple. We have
\begin{align*}
-\frac{d}{ds}\int_M \al \wedge e^{-s DV} & = \int_M \al \wedge DV\wedge e^{-s DV}\\
& = \int_M D(V\wedge \al \wedge e^{-s DV})  +\int_M V \wedge D(\al\wedge e^{-s DV})\\
&=0.
\end{align*}
Here the first integral in the second line is zero by the Stokes' theorem since the top component of an equivariantly exact form is exact, and the second integral is zero because of $D\al=0$ and $D^{2}V=0$.

The infinite-dimensional case was eloquently explained by Atiyah \cite{atiyah1985circular}, elaborating the observation by Witten. Namely, one replaces a finite-dimensional Riemannian manifold $M$ by its loop space $LM=\text{Map}(S^1_\be,M)$
where $S^{1}_{\beta}=\RR/\be\ZZ$ is a `thermal' circle. The loop space $LM$ is orientable when $M$ is a spin manifold and has a natural circle action with the vector field $v$. The equivariant differential on $LM$ has the same form \eqref{D-eq}, where now $d$ stands for the de Rham differential on $LM$. The cotangent bundle
$T^{{*}}LM$ carries a natural $S^{1}$-invariant $1$-form $\theta$,  dual to the vector field $v$ with respect to the Riemannian metric on $LM$. The closed $2$-form $\omega=d\theta$ plays the role of symplectic form on $LM$ (which is degenerate along closed geodesics), and the circle action on $LM$ is Hamiltonian: there is a function $H: LM\to\RR$ such that
$$i_{v}\omega=-dH.$$

Since $-H+\omega=D\theta$, the differential form $\al=e^{-H+\omega}\in\Omega^{\bullet}(LM)$ is equivariantly closed, so analogously to \eqref{eq:wi} for any $V\in\Omega^{1}_{S^{1}}LM$ we have 
\beq \label{eq:loc-loop}
\int_{LM}e^{-H+\omega}=\int_{LM} e^{-H+\omega -s DV}.
\eeq
Here integral over $LM$ stands for the integration of the top component of a differential form on the loop space. In particular, putting $V=\theta$, in the limit $s\to\infty$ we obtain
the Atiyah-Singer formula for the index of the Dirac operator on $M$.

This approach can be also formulated using supersymmetric quantum mechanics, as was originally proposed by Witten. 
Namely, consider a classical supersymmetric mechanical system with the Lagrangian $\LL$ and Hamiltonian $H$, having a single real supercharge $Q$ satisfying $\{Q,Q\}=2iH$, where $\{~,~\}$ is the graded Poisson bracket. After the quantization we get the simplest $N=1/2$ supersymmetric quantum system with real supercharge $ \hat Q$ satisfying $\hat Q^{2}=\hat H$, where quantum Hamiltonian $\hat{H}$ acts in the Hilbert space\footnote{Here it is assumed that we have even number of Majorana fermions.} $\cH$, naturally $\mathbb Z_2$-graded by a fermion number operator $(-1)^F$.

The Witten index is given by the path integral
\beq\label{Index-standard}
I=\Tr (-1)^{F} e^{-\be \hat H}=\bm{\int}e^{-S_{E}[x,\psi]}\cD x\cD\psi,
\eeq
where 
$$S_E[x,\psi]=\int_0^\be \LL_E(x,\dot{x};\psi,\dot\psi) d\tau$$ 
is the Euclidean action for the Lagrangian $\LL$, and $\cD x\cD\psi$ is a suitable supersymmetric path integration measure for the bosonic and fermionic degrees of freedom. The integration in \eqref{Index-standard} goes over for the periodic boundary conditions and we have
$$\del S_E=0\quad\text{and}\quad  \del (\cD x\cD\psi)=0,$$
where $\del$ is the Wick rotated (for Euclidean time) classical supersymmetry  transformation generated by a supercharge $Q$ --- the analogue of the equivariant differential $D$ in the loop space approach. Let $V[x,\psi]$ be an invariant  deformation, a functional of classical fields satisfying 
$$\del^{2}V=0.$$ 
Formally repeating the proof  of \eqref{eq:wi}, we obtain that for all $s$
\beq\label{PI-local-standar}
\bm{\int}e^{-S_{E}}\cD x\cD\psi=\bm{\int}e^{-S_{E}-s\del V}\cD x\cD\psi.
\eeq
In case $S_{E}=\del V$ one can take $V=S_E$, so in the limit $s\to \infty$ the path integral localizes on the zero locus of $S_E$. The latter is nothing but the set of constant loops, arising from the standard kinetic term in the action. Finally, the relation $V=\delta\theta$ and the identification of $\delta$ with $D$ establishes equivalence with the loop space approach. 

\subsection{New Supersymmetric Localization Principle}\label{new-susy-loc}
We start by considering the following simple finite-dimensional example. 

Let $M$ be compact orientable $n$-dimensional manifold with circle action and let $N=M\times S^{1}$. The $n+1$-dimensional manifold $N$ has an obvious circle action and let $\alpha$ be an equivariant closed form on $N$ such that $i_{u}\al=0$, where $u$ is a vector filed on $N$ which at a point $(p,\vp)\in N$ is $d/d\vp$, where $\vp\!\!\mod 2\pi$ is the coordinate on $S^{1}$. In other words, $\al$ does not contain the $1$-form $d\vp$ as a factor and the top component of $\al$ has degree $n$. The standard localization principle applies to the integral $\int_{N}\al$, which is obviously $0$, but does not apply to the integral 
$\int_{N}d\vp\wedge \al$, since the form $d\vp\wedge \al$ is not equivariantly closed. However, we have the following statement.
\begin{lemma} \label{loc-fd} Let $V\in\Omega^{1}_{S^{1}}(N)$ be such that $i_{u}V=0$ and $i_{u}DV=0$. Then for all $s$
$$\int_{N} d\vp\wedge\al=\int_{N}d\vp\wedge\al\wedge e^{-s DV}.$$
\end{lemma}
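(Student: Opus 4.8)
The plan is to run the elementary argument used above for \eqref{eq:wi} — differentiate in $s$ and show the derivative vanishes — but now carrying along one extra term that appears because $d\varphi\wedge\alpha$, unlike $\alpha$ itself, is not equivariantly closed. The point will be that the two hypotheses $\iota_uV=0$ and $\iota_uDV=0$ are exactly what is needed to kill that extra term.

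First I would collect the basic facts. Since $D$ is an odd derivation of $\Omega^{\bullet}_{S^1}(N)$ with $D^2=-\cL_v$ and $V$ is $S^1$-invariant, $D(DV)=-\cL_vV=0$, so $e^{-sDV}$ — a polynomial in $s$, as $N$ is finite-dimensional — is equivariantly closed, and with $D\alpha=0$ this gives $D(\alpha\wedge e^{-sDV})=0$. Also $d\varphi$ is $S^1$-invariant, $D(d\varphi)=-\iota_v\,d\varphi$ is a function (in fact the constant rotation number of the circle action in the $S^1$-factor, which is $0$ for the product action), and $\iota_u(e^{-sDV})=0$ because $\iota_uDV=0$. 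From $-\tfrac{d}{ds}\int_N d\varphi\wedge\alpha\wedge e^{-sDV}=\int_N d\varphi\wedge\alpha\wedge DV\wedge e^{-sDV}$ and the Leibniz rule applied to $\Xi:=d\varphi\wedge V\wedge\alpha\wedge e^{-sDV}$ one gets
\[
D\Xi=\bigl(D\,d\varphi\bigr)\wedge V\wedge\alpha\wedge e^{-sDV}-d\varphi\wedge DV\wedge\alpha\wedge e^{-sDV},
\]
using $D(\alpha\wedge e^{-sDV})=0$; since $DV$ has even degree, $d\varphi\wedge\alpha\wedge DV\wedge e^{-sDV}=d\varphi\wedge DV\wedge\alpha\wedge e^{-sDV}$, so the integrand equals $(D\,d\varphi)\wedge V\wedge\alpha\wedge e^{-sDV}-D\Xi$. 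Integrating over $N$, the term $\int_N D\Xi$ vanishes exactly as in the standard proof: $\int_N d(\cdot)=0$ by Stokes since $N$ is closed, and $\int_N\iota_v(\cdot)=0$ because $\iota_v$ lowers degree while $N$ carries no $(n+2)$-forms. The remaining term vanishes because $\iota_uV=0$, $\iota_u\alpha=0$ and $\iota_u(e^{-sDV})=0$ force $\iota_u\bigl(V\wedge\alpha\wedge e^{-sDV}\bigr)=0$, i.e.\ $V\wedge\alpha\wedge e^{-sDV}$ contains no $d\varphi$-factor, hence has no top-degree component on the $(n+1)$-dimensional $N=M\times S^1$ and integrates to $0$ — and multiplying by the function $D\,d\varphi$ does not change this. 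Therefore $\tfrac{d}{ds}\int_N d\varphi\wedge\alpha\wedge e^{-sDV}=0$ for all $s$, and evaluating at $s=0$ gives the claim.

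I do not expect a serious obstacle: this is a sign-careful variant of the computation already written out for \eqref{eq:wi}. The one genuinely new point — and the step I would be most careful with — is isolating the anomalous term $(D\,d\varphi)\wedge V\wedge\alpha\wedge e^{-sDV}$ and checking that $\iota_uV=0$ and $\iota_uDV=0$ are precisely what make it (and nothing else) disappear after integration over $N$; this is the finite-dimensional shadow of the conditions (A)--(B) imposed on the deformation $V$ in the path-integral setting. If one takes the obvious product circle action on $N$, then $D\,d\varphi=0$ and that term is absent from the outset, so the hypotheses could be weakened; I would nevertheless write the proof so that it covers a general circle action on $N$.
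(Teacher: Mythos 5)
Your proof is correct and follows essentially the same route as the paper's: differentiate in $s$, use the Leibniz rule to split off an equivariantly exact term (killed by Stokes) and an anomalous term $(D\,d\vp)\wedge V\wedge\al\wedge e^{-sDV}$, then observe that $\iota_u V=0$, $\iota_u DV=0$, $\iota_u\al=0$ force this term to have no $d\vp$-factor and hence no top-degree component. The only discrepancy is a side remark: the paper takes the ``obvious'' circle action on $N=M\times S^1$ to include unit-speed rotation of the $S^1$ factor, so $D(d\vp)=-\iota_v\,d\vp=-1$, not $0$ as you suggest for the product action; this does not affect your argument, which treats $D\,d\vp$ as an arbitrary $0$-form.
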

\begin{proof}
It follows the derivation of \eqref{eq:wi}. Namely,
\begin{align*}
-\frac{d}{ds}\int_N d\vp\wedge\al\wedge e^{-s DV} & = \int_N d\vp\wedge  \al \wedge DV\wedge e^{-s DV}\\
& = \int_{N} D(V\wedge d\vp\wedge\al \wedge e^{-s DV})  +\int_N V\wedge D( d\vp\wedge\al\wedge e^{-s DV})\\
&=\int_N V\wedge D( d\vp)\wedge\al\wedge e^{-s DV}- \int_N V\wedge d\vp\wedge D(\al\wedge e^{-s DV})\\
&=0.
\end{align*}
As before, the first integral in the second line is zero by the Stokes' theorem, and the last integral in the third line is zero because $D\al=0$ and $D^{2}V=0$. The new feature is first integral in the third line,
which is zero because $D(d\vp)=-1$ and conditions $i_{u}V=0$ and $i_{u}DV=0$ imply that the form $V\wedge\al\wedge e^{-s DV}$ does not contain $d\vp$, so its top component is zero.
\end{proof}

As in the previous section, in the infinite-dimensional case, we start with a classical supersymmetric mechanical system with the Lagrangian $\LL$ and Hamiltonian $H$, having a single real supercharge $Q$ satisfying $\{Q,Q\}=2iH$, where $\{~,~\}$ is the graded Poisson bracket. As a basic example, consider a supersymmetric particle on the $n$-dimensional Riemannian manifold $M$ with the Lagrangian\footnote{Here we use the `physical' time and imaginary $i$ is needed to make quadratic expressions in $\psi$ ``real'', that is, 
$\overline{i\psi^{\mu}\psi^{\nu}}=-i\psi^{\nu}\psi^{\mu}=i\psi^{\mu}\psi^{\nu}$.}

\begin{align} \label{L-particle}
\LL & =\frac{1}{2}\la\dot{x},\dot{x}\ra+\frac{i}{2}\la \psi,\nabla_{\dot{x}}\psi\ra.
\end{align}
Here $x(t)$ is a path in $M$ expressed in terms of local coordinates $x=(x^{1},\dots, x^{n})$, $\dot{x}$ is the velocity vector field along the path, $\la~,~\ra$ is the Riemannian inner product in $T_{x(t)}M$ and $\psi(t)=(\psi^{1}(t),\dots,\psi^{n}(t))\in\Pi T_{x(t)}M$ is a tangent vector field along the path with the reverse parity (i.e, with the anti-commuting values). Finally, $\nabla$ in \eqref{L-particle} is an arbitrary metric connection in $TM$ with totally anti-symmetric torsion. The latter condition guarantees the standard supersymmetry
\beq\label{susy-standard}
\delta x^{\mu}=i\psi^{\mu},\quad \del \psi^{\mu}=-\dot{x}^{\mu},\quad\mu=1,\dots,n,
\eeq 
with a real supercharge $Q$, obtained by the Noether theorem (see Sections \ref{sec:S1} and \ref{sec:G}).

\begin{remark} \label{even-odd} In case of even $n$, after the quantization we get $N=1/2$ supersymmetric quantum particle on $M$ with a single real supercharge $ \hat Q$ and quantum Hamiltonian $\hat{H}=\frac{1}{2}\hat{Q}^{2}$ acting in the $\mathbb Z_2$-graded Hilbert space $\cH$ with the fermion number operator 
$$(-1)^F=c_n 2^{n/2}\hat\psi^{1}\cdots\hat\psi^{n}.$$
Here  $\hat\psi^{\mu}$ are Majorana (Hermitian) fermion operators satisfying canonical anti-commutation relations
\beq\label{cac}
[\hat\psi^{\mu},\hat\psi^{\nu}]=\delta^{\mu\nu}\hat I,
\eeq
where $[~,~]$ is a graded commutator and $\hat I$ is an identity operator acting on the  fermion Hilbert space. The overall constant factor $c_n$ has been introduced to ensure that fermion number operator is Hermitian and satisfies the identity 
\beq \label{F-square}
\left\{(-1)^F\right\}^2=\hat{I}.
\eeq
Namely,  it follows from \eqref{cac} that we have two choices $c_n=\pm i^{n(n-1)/2}$, which simply reflects the $\mathbb Z_2$ ambiguity in the definition of the total fermion parity. The choice of $c_{n}$ is reflected in the overall sign factor in the fermionic path integral measure.

In the case of odd $n$  it is well-known (see \cite{Delmastro:2021xox} for the recent discussion and the references therein) that Majorana fermions are anomalous in the sense that there is no Hilbert space interpretation.\footnote{Nevertheless, such system with an odd number of Majorana fermions are still physically well-defined in terms of path integral formulation.} To remedy this situation, we replace
$M$ by $M\times S^{1}$ and trivially add a single term $\psi^{n+1}\dot\psi^{n+1}$ to the Lagrangian, while keeping the kinetic term unchanged. Thus in this case we also have a graded Hilbert space $\cH$ where $\hat{H}$ acts, and it what follows we will always assume such modification for odd $n$.
\end{remark}
Like in the finite-dimensional example, our new localization principle applies to the systems with zero Witten index,
$$I=\Tr (-1)^{F} e^{-\be \hat H}=\bm{\int}e^{-S_{E}[x,\psi]}\cD x\cD\psi=0.$$
This situation is quite general: the index of the standard Dirac operator vanishes on arbitrary odd-dimensional spin manifolds, on compact Lie groups, etc. Specifically, we assume the following conditions.
\begin{enumerate}
\item [(i)] \label{1} The vanishing of the index is solely due to the presence of certain fermionic zero modes $\chi_1,\dots,\chi_n$ in the path integral.\footnote{In some cases, when one starts with a supersymmetric system with a vanishing $I$ but without fermionic zero modes, the robustness of the index allows one to smoothly deform the system for which $I$ vanishes exclusively due to the presence of fermionic zero modes.}
\item[(ii)] \label{2} In the Hilbert space $\cH$, the zero modes can be saturated by the `local' operators $\hat \chi_\mu$ 
that satisfy canonical anti-commutation relations \eqref{cac} and the relation
\beq\label{chi-F}
c_n\hat\chi_{1}\cdots\hat\chi_{n}=2^{-n/2}(-1)^{F},
\eeq
where $n=\dim M$ if $M$ is even-dimensional, and $n=\dim M+1$ otherwise.
\end{enumerate}
It follows from (i)--(ii) that\footnote{\label{ft:measure}The factor $c_n$ can be absorbed into the definition of the measure $\mathscr D\psi$ to make it consistent with the trace side on the Hilbert space. See section \ref{sec:G} for relevant details on the measure.  }
\beq\label{Tr=Str}
2^{-n/2}\Tr e^{-\be\hat{H}}=\mathrm{Str}\,c_n \hat\chi_{1}\cdots\hat\chi_{n}e^{-\be \hat{H}}=\bm{\int}  \chi_1\cdots \chi_n e^{-S_{E}[x,\psi]}\cD x\cD\psi,
\eeq
 so the path integral with periodic boundary conditions for fermions, with insertion of the zero modes, computes the trace in the Hilbert space $\cH$ and not the supertrace! 
 
\begin{remark} \label{hat-chi} We emphasize that the insertion of $\chi_\mu$ saturates the zero modes in the path integral  side of \eqref{Tr=Str} and makes it non-zero, while the insertion of local operators $\hat \chi_\mu$ saturates the fermion number operator $(-1)^{F}$ in the spectral side of \eqref{Tr=Str}. It is important to observe that the operators $\hat \chi_\mu$ are not naive quantization of the zero modes, since the latter are generically non-local in time and have no counterpart in the 
Hamiltonian formalism. The local operators $\hat\chi_{\mu}$ are just $\hat\psi^{\mu}(0)$, where fermion operators are normalized to satisfy \eqref{cac}.

\end{remark}
\begin{remark}\label{zero-modes-general} It should also be noted that if the fermion zero modes are present, insertions of $\chi^{a}$ and of $\psi^{a}(0)$ into the path integral give the same result. However, in case when there are no  fermion zero modes, the correct matching of the spectral and path integral sides in \eqref{Tr=Str} is provided by the insertion of $\psi^{a}(0)$.
\end{remark}


Though the formula \eqref{Tr=Str} looks very appealing, the standard supersymmetric localization discussed in Section \ref{subsec:slp} no longer applies, because
$$\del (\chi_1\cdots \chi_ne^{-S_E})\neq 0,$$
and the path integral in \eqref{Tr=Str} non-trivially depends on $\be$.

To formulate our new localization principle, we further assume that fermion zero modes satisfy the following conditions
\beq\label{zm-ortog}
\int\delta\chi_{\mu}d\chi_{\mu}=0,\quad \mu=1,\dots,n,
\eeq
which mean that $\delta\chi_{\mu}$ does not contain fermion degree freedom $\chi_{\mu}$, which we will indicate as  $\delta\chi_{\mu}\perp\chi_{\mu}$.  The following definition is fundamental.
\begin{definition} \label{def-non-standard} Invariant deformation for a  supersymmetric system with fermion zero modes $\chi_{1},\dots,\chi_{n}$ satisfying \eqref{zm-ortog} is a functional $V$ with a Grassmannian odd parity, satisfying the following conditions.
\begin{itemize}
\item[(A)] \label{A} $V$ is invariant,
$$\delta^{2}V=0.$$
\item[(B)] \label{B} $V, \del V\perp\chi_{\mu}$, i.e.
$$\int Vd\chi_{\mu}=\int\delta V d\chi_{\mu}=0,\quad \mu=1,\dots,n.$$
\end{itemize} 
\end{definition}
Note that condition (A) is standard, while condition (B), the absence of fermion zero modes in $V$ and $\del V$, is a completely new requirement. It is rather constraining and forces $V$ to explicitly depend on the first time derivatives of fermion degrees of freedom. 
The analog of Lemma \ref{loc-fd} is the following surprisingly simple statement.
\begin{proposition} \label{new-loc} Let $S_{E}$ be the Euclidean action of the supersymmetric theory with fermion zero modes $\chi_{1},\dots, \chi_{n}$ satisfying \eqref{zm-ortog}. Then for all $s$ we have
$$\bm{\int}\chi_{1}\cdots\chi_{n}e^{-S_{E}}\cD x\cD\psi=\bm{\int}\chi_{1}\cdots \chi_{n}e^{-S_{E}-s\del V}\cD x\cD\psi,$$
where $V$ is a deformation satisfying conditions (A)--(B).
\end{proposition}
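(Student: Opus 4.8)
The plan is to transplant the proof of Lemma \ref{loc-fd} to the path integral, with the Berezin integration $\int d\chi_{\mu}$ over the fermionic zero modes playing the role of ``extracting the component along $d\vp$'' on $N=M\times S^{1}$, and with conditions (B) and \eqref{zm-ortog} playing the role of $i_{u}V=0$, $i_{u}DV=0$ and $i_{u}\al=0$. Differentiating under the path integral sign, it suffices to prove that
\[
-\frac{d}{ds}\bm{\int}\chi_{1}\cdots\chi_{n}\,e^{-S_{E}-s\delta V}\,\cD x\cD\psi
=\bm{\int}\chi_{1}\cdots\chi_{n}\,(\delta V)\,e^{-S_{E}-s\delta V}\,\cD x\cD\psi
\]
vanishes for all $s$.

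First I would write the integrand as a total supersymmetry variation plus a correction. Since $\delta$ is an odd derivation with $\delta S_{E}=0$ and, by condition (A), $\delta^{2}V=0$, one has $\delta\big(e^{-S_{E}-s\delta V}\big)=0$; together with the Leibniz rule and the odd parity of $V$ (and of $\chi_{1}\cdots\chi_{n}$ when $n$ is odd) a short computation gives, up to signs,
\[
\chi_{1}\cdots\chi_{n}\,(\delta V)\,e^{-S_{E}-s\delta V}
=\delta\!\left(V\,\chi_{1}\cdots\chi_{n}\,e^{-S_{E}-s\delta V}\right)
+V\,\delta(\chi_{1}\cdots\chi_{n})\,e^{-S_{E}-s\delta V}.
\]
The path integral of the first term vanishes by the supersymmetric Ward identity — the path-integral analogue of Stokes' theorem — which holds because $\delta S_{E}=0$, $\delta(\cD x\cD\psi)=0$, and boundary terms are absent for periodic boundary conditions; this step uses only condition (A), exactly as the vanishing of $\int_{N}D(\cdots)$ in Lemma \ref{loc-fd} used only $D\al=0$, $D^{2}V=0$.

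It remains to kill the correction term $\bm{\int}V\,\delta(\chi_{1}\cdots\chi_{n})\,e^{-S_{E}-s\delta V}\cD x\cD\psi$, and here the new conditions enter, just as $i_{u}V=0$ and $i_{u}DV=0$ entered in Lemma \ref{loc-fd}. Expanding $\delta(\chi_{1}\cdots\chi_{n})=\sum_{\mu=1}^{n}(-1)^{\mu-1}\chi_{1}\cdots\chi_{\mu-1}(\delta\chi_{\mu})\chi_{\mu+1}\cdots\chi_{n}$ and fixing a summand $\mu$: the factor $\delta\chi_{\mu}$ does not contain the zero mode $\chi_{\mu}$ by \eqref{zm-ortog}, the remaining $\chi_{\nu}$ ($\nu\ne\mu$) are independent generators, by condition (B) neither $V$ nor $\delta V$ contains $\chi_{\mu}$ (hence neither does $e^{-s\delta V}$), and since $\chi_{\mu}$ is a fermion zero mode it drops out of the quadratic fermion action, so $e^{-S_{E}}$ is also independent of $\chi_{\mu}$. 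Thus the entire $\mu$-th integrand is independent of the Grassmann variable $\chi_{\mu}$, so its Berezin integral $\int d\chi_{\mu}$ — a factor of $\cD\psi$ — vanishes; summing over $\mu$ completes the proof. This is precisely the path-integral shadow of the observation that $V\wedge\al\wedge e^{-sDV}$ contains no $d\vp$ in Lemma \ref{loc-fd}.

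The main obstacle is rigor rather than algebra: justifying differentiation under the path integral, establishing the supersymmetric Ward identity $\bm{\int}\delta(\cdots)\cD x\cD\psi=0$ at the required level (invariance of the measure plus suppression of boundary contributions by periodicity), and — most delicately — making precise the decomposition of the fermionic measure that isolates the zero modes $\chi_{\mu}$, so that ``the integrand does not depend on $\chi_{\mu}$'' literally forces $\int d\chi_{\mu}$ of it to be zero. Granting the formal calculus of the supersymmetry variation available in Section \ref{subsec:slp}, the remaining manipulations are routine.
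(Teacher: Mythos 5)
Your proof is correct and follows essentially the same route as the paper's: differentiate under the integral, rewrite $\chi_{1}\cdots\chi_{n}\,\delta V$ via Leibniz as a total variation $\delta(V\chi_{1}\cdots\chi_{n}\,e^{-S_{E}-s\delta V})$ plus the correction $V\,\delta(\chi_{1}\cdots\chi_{n})\,e^{-S_{E}-s\delta V}$, kill the first by the Ward identity/Stokes (using (A)), and kill the second because each summand of $\delta(\chi_{1}\cdots\chi_{n})$ is independent of the corresponding $\chi_{\mu}$ by \eqref{zm-ortog} and (B), so its Berezin integral vanishes. The only cosmetic difference is whether $V$ is written to the left or the right of the $\chi$-product; your more explicit treatment of the per-$\mu$ Berezin integration and of why $e^{-S_{E}}$ is independent of $\chi_{\mu}$ adds useful detail but is the same argument.
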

\begin{proof} We have 
\begin{gather*}
-\frac{d}{ds}\bm{\int}\chi_{1}\cdots \chi_{n}e^{-S_{E}-s\del V}\cD x\cD\psi
=\bm{\int}\chi_{1}\cdots \chi_{n}\del V e^{-S_{E}-s\del V}\cD x\cD\psi\\
=(-1)^{n}\bm{\int}\left[\del(\chi_{1}\cdots\chi_{n}V) -\del(\chi_{1}\cdots \chi_{n})V\right]e^{-S_{E}-s\del V}\cD x\cD\psi \\
=0.
\end{gather*}
Here the integral of the first term in the second line is zero by Stokes' theorem, and the integral of the second term is zero since by conditions \eqref{zm-ortog} and (A)--(B) it does not contain zero modes.
\end{proof}
As was remarked before, conditions (A)--(B) imply that localization $V$ contains higher derivatives and when $\del V$ is positive semi-definite, the path integral localizes to the solutions of the second or higher order ordinary differential equations. In particular, such $V$ provides a mechanism for localizing on the geodesic locus. We will consider this application of the new localization principle in Sections \ref{sec:S1} and \ref{sec:G}.

Note that Proposition \ref{new-loc} is rather general and does not use condition (ii). Actually, one can relax the condition (ii) and still obtain the information about the partition function when the corresponding supersymmetric system has decoupled bosonic and fermionic degrees of freedom, so
$$\cH=\cH_{B}\otimes\cH_{F}\quad\text{and}\quad \hat{H}=\hat{H}_B\otimes I_F+ I_B\otimes \hat{H}_F,$$
where the fermion Hilbert space $\cH_{F}$ is irreducible $2^{n/2}$-dimensional
Clifford module (see Remark \ref{even-odd}).
In this case, the supertrace in \eqref{Tr=Str} is proportional to bosonic trace, 
$$\text{Str}\,c_{n}\hat\chi_{1}\cdots\hat\chi_{n}e^{-\be \hat{H}}=2^{-n/2}\Tr e^{-\be \hat{H}_{B}}\otimes e^{-\be \hat{H}_{F}}=2^{-n/2}\Tr_{\cH_{F}}e^{-\be \hat{H}_{F}}\cdot\Tr_{\cH_{B}}e^{-\be\hat{H}}.$$ 
In particular, if the supersymmetric system has free fermion part, then quantum Hamiltonian $\hat{H}$ acts as identity operator on the fermion Hilbert space $\cH_{F}$ and
$$\text{Str}\,c_{n}\hat\chi_{1}\cdots\hat\chi_{n}e^{-\be\hat{H}}=\Tr_{\cH_{B}}e^{-\be\hat{H}}.$$  
Using the representation \eqref{Tr=Str} and Proposition \ref{new-loc}, we obtain a pure bosonic trace formula by localizing the supersymmetric path integral \eqref{Tr=Str} in the limit $s\to\infty$. 

We will illustrate this general procedure in the next two sections by considering a simple case $M=S^{1}$ with the flat metric, and $M=G$, a compact connected semi-simple Lie group with the bi-invariant metric. In the first 
case the Lagrangian is quadratic so fermion degrees of freedom trivially decouple, while in the second case instead of the Levi-Civita connection we use invariant flat connection on $G$ and corresponding Kostant cubic Dirac operator.

\section{Localization Proof of the Trace formula on $S^1$}  \label{sec:S1}
In this section, we consider the simplest application of the localization principle explained in section \ref{sec:gen} by providing localization proof of the trace formula on $S^1$. This identity, the famous Jacobi inversion formula\footnote{It is a special case of general formula for Jacobi theta-function \cite{jacobi1828} and was already known to Poisson \cite{poisson1827}.}, is related to the spectrum of the Laplace operator $\Delta=-\dfrac{d^{2}}{dx^{2}}$ on $\cH_{B}=L^{2}(S^{1})$, where $S^1=\mathbb R/2\pi\mathbb Z$. Namely, consider a free quantum particle of mass $1$ on $S^{1}$ with the Hamiltonian $\hat{H}=\frac{1}{2}P^{2}=\frac{1}{2}\Delta$, and let $Z(\be)=\Tr_{\cH_{B}} e^{-\be\hat{H}}$ be the corresponding thermal partition function. We have
\beq \label{eq:trS1}
Z(\be)=\sum_{n=-\infty}^{\infty}e^{-\be  n^2/2}=\sqrt{{2\pi \ov\be}}\sum_{n=-\infty}^{\infty}e^{-2\pi^2n^2/\be},\quad \be> 0.
\eeq
\begin{remark} In the number theory, Jacobi inversion formula is usually written as
$$\theta\left(-\frac{1}{z}\right)=\sqrt{\frac{z}{i}}\,\theta(z),\quad\text{where}\quad\theta(z)=\sum_{n=-\infty}^{\infty}e^{\pi in^{2}z}\quad\text{and}\quad \im z>0$$
is the Jacobi theta series; the branch of the square root is defined by $\sqrt{1}=1$. Of course, Jacobi inversion formula can be easily obtained from the Poisson summation formula
\begin{equation*}
\sum_{n=-\infty}^{\infty}f(n)=\sum_{n=-\infty}^{\infty}\hat{f}(n),
\end{equation*}
where 
$$\hat{f}(k) =\int_{-\infty}^{\infty}f(x)e^{2\pi i kx}dx.$$
Conversely, multiplying the Jacobi inversion formula for $z=it$ by a `nice' function $g(t)$, integrating over $t$ and denoting $$f(x)=\int_{0}^{\infty}e^{-\pi tx^{2}}g(t)dt,$$
we get the Poisson summation formula.
\end{remark}

It is rather remarkable that one can get Jacobi inversion formula \eqref{eq:trS1} as an elementary application of the new supersymmetric localization principle, formulated in Section \ref{new-susy-loc}. Namely, consider the simplest
supersymmetric Lagrangian
\beq\label{L-circle}
\LL={1\ov 2}(\dot x^2+i\psi \dot \psi)
\eeq
which has a standard $N=1/2$ supersymmetry \eqref{susy-standard}. We have
\begin{align*}
\delta \LL &=\frac{1}{2}(2\delta\dot{x}\,\dot{x}+i\delta\psi\dot\psi-i\psi\delta\dot\psi)=\frac{i}{2}\frac{d}{dt}(\dot{x}\psi),
\end{align*}
so by the Noether theorem, 
\begin{align*}
iQ &= \delta x\frac{\pa \LL}{\pa\dot{x}}+\delta\psi\frac{\pa \LL}{\pa\dot\psi}-\frac{i}{2}(\dot{x}\psi)=i\dot{x}\psi,
\end{align*}
 where $\dfrac{\pa}{\pa \dot{\psi}}$ stands for the left partial derivative in Grassmann variable.
Using the graded Poisson bracket 
$$\{\psi,\psi\}=i$$
for Majorana fermion, we get the classical Hamiltonian 
\beq\label{Q-H-0}	
H=-\frac{i}{2}\{Q,Q\}=\frac{1}{2}p^{2},
\eeq
where $p=\dot{x}$ by the Legendre transform.

The corresponding Witten index $I$ is given by \eqref{Index-standard}, where the Euclidean action is
$$S_{E}=\int_{0}^{\be}{1\ov 2}(\dot x^2+\psi \dot \psi)d\tau,$$
where now dot stands for the derivative to the Euclidean time $\tau$;  due to the presence of the zero mode
$$\chi=\frac{1}{\be}\int_0^\be \psi\, d\tau $$
we have $I=0$.

We can saturate the fermionic zero mode by inserting $\hat \chi=\hat \psi$ in the trace side. Since after the quantization we have $\hat\chi^{2}=\frac{1}{2} I$, conditions (i)--(ii) of our localization principle are satisfied (here we are using Remark \ref{even-odd}).
Moreover, $\hat{H}=\frac{1}{2}\hat{Q}^{2}=\frac{1}{2}\Delta$, so the fermion degrees of freedom totally decouple and we have
\beq
\text{Str}\,\hat \chi e^{-\be  \hat H}=\Tr_{\cH_{B}} e^{-\be \hat H}=Z(\be).
\eeq

On the path integral side, we have
\beq \label{I-circle}
I(\chi,\be)=\bm{\int}_{ \Pi TLS^{1}}\chi e^{-S_{E}}\cD x\cD\psi.
\eeq
Here $e^{ix(\tau)}\in LS^{1}=\mathrm{Map}(S^{1}_{\be}, S^{1})$ is the loop group of the circle $S^{1}$, where $S^{1}_{\be}=\RR/\be\ZZ$ the `thermal' circle , $\psi(\tau)\in\Pi L\RR$ are $\beta$-periodic functions with anticommuting values, and integration goes over $ \Pi TLS^{1}=LS^{1}\times\Pi L\RR$, the tangent bundle of $LS^{1}$ with the reverse parity of the fibers, and
$\cD x\cD\psi$ denotes the path integral `measure' in $ \Pi TLS^{1}$, defined as follows.

Lett $u_{0}(\tau)=\dfrac{1}{\sqrt\be}$, $u_{n}(\tau)=\sqrt{\dfrac{2}{\be}}\sin\omega_{n}\tau$ for $n>0$ and $u_{n}(\tau)=\sqrt{\dfrac{2}{\be}}\cos\omega_{n}\tau$ for $n<0$, where $\omega_{n}=\dfrac{2\pi n}{\be}$,  be the orthonormal eigenfunctions of the operator $-\dfrac{d^{2}}{d\tau^{2}}$ in the real Hilbert space $L^{2}(S^{1}_{\be};\RR)$.
Consider the eigenfunction expansions
\begin{align*}
x(\tau) =\sum_{n=-\infty}^{\infty}c_{n}u_{n}(\tau)\quad\text{and}\quad\psi(\tau) =\sum_{n=-\infty}^{\infty}\psi_{n}u_{n}(\tau),
\end{align*}
where\footnote{This reflects the fact that $x(\tau)$ are real-valued and $\psi(\tau)$ is Majorana fermion.} $\bar{c}_{n}=c_{n}$, $\bar{\psi}_{n}=\psi_{n}$, 
 and put
 \begin{equation}\label{measure-2}
\cD x\cD\psi=dc_{0}\prod_{n=1}^{\infty}dc_{n}dc_{-n}\, d\psi_{0}\prod_{n=1}^{\infty}d\psi_{-n}d\psi_{n}.
\end{equation}
As the result, bosonic and fermionic Gaussian path integrals with respect to $\cD x\cD\psi$ are expressed in terms of the regularized determinants  and Pfaffians of the corresponding differential operators in the real Hilbert space $L^{2}(S^{1}_{\be};\RR)$. In particular, since the eigenvalues of $-\dfrac{d^{2}}{d\tau^{2}}$ are $\omega^{2}_{n}$,  
we have by the standard zeta function regularization
\begin{equation}\label{det-real}
\det(-\pa^{2}_{\tau})=\prod_{n=1}^{\infty}\omega_{n}^{4}=\beta^2,
\end{equation}
which coincides with $\det(-\pa^{2}_{\tau})$  in the complex Hilbert space $L^{2}(S^{1}_{\be})$. For the skew-symmetric operator $\pa_{\tau}$ in $L^{2}(S^{1}_{\be};\RR)$ we have 
$$
\pa_{\tau}\begin{pmatrix} u_{n}\\u_{-n}\end{pmatrix}=\begin{pmatrix} 0 &\omega_{n}\\ -\omega_{n} & 0\end{pmatrix}\begin{pmatrix} u_{n}\\u_{-n}\end{pmatrix},\quad n>0,$$
so
\begin{equation}\label{pfaf-real}
\mathrm{Pf}(\pa^{2l+1}_{\tau})=\prod_{n=1}^{\infty}\omega_{n}^{2l+1}=\beta^{l+\frac{1}{2}}.
\end{equation}
\begin{remark}\label{rescaling}
The bosonic and fermionic measures $\cD x$ and $\cD\psi$ have remarkable scaling property: they are invariant under the change of variables $c_{0}\mapsto ac_{0}$ and $c_{n}\mapsto ac_{n}$ for any $a>0$, and $\psi_{0}\mapsto b\psi_{0}$ and $\psi_{n}\mapsto b\psi_{n}$ for any $b>0$. Indeed, the Jacobian for the bosonic change of variables is
$$a\prod_{n=1}^{\infty}a^{2}=a \,e^{2\zeta(0)\log a}=1,$$
since for the Riemann zeta-function $\zeta(0)=-1/2$, 
and similarly for the fermionic variables.
We  choose $a=b=1/\sqrt{\be}$, so that $ac_{0}$ varies from $0$ to $2\pi=\mathrm{Vol}(S^{1})$, and $\chi=a\psi_{0}$ satisfies condition (ii) in Section \ref{new-susy-loc}.
\end{remark}

It follows from the discussion in Section  \ref{new-susy-loc}. that the simplest invariant deformation of $I(\chi,\be)$ is given by the following higher-derivative functional\footnote{It is noteworthy that such higher-derivative term resembles the one in the linearized Schwarzian action as in \cite{Stanford:2017thb}. }
\beq \label{V-S1}
V=-{1\ov 2}\int_0^\be \dot \psi\ddot x\, d\tau, 
\eeq
so
\beq \label{del-V-S1}
\delta V={1\ov 2}\int_0^\be (\ddot x^2 +\dot \psi \ddot \psi)d\tau,
\eeq
where we are using Euclidean version of formulas \eqref{susy-standard},
\beq\label{susy-standard-Euclid}
\del x=\psi\quad\text{and}\quad \del\psi=-\dot{x}.
\eeq

Indeed, it is elementary to verify that $\del\chi \perp \chi$ and $V,\del V\perp\chi$. Thus
\beq
I(\chi,\be)=\bm{\int}_{ \Pi TLS^{1}}\chi e^{-S_{E}-s\del V}\cD x\cD\psi
\eeq
and according to \eqref{del-V-S1}, in the limit $s\to\infty$  the path integral localizes on closed geodesics satisfying $\ddot x=0$, i.e., $x(\tau)=c+\omega_{n}\tau$.

Let $\Omega S^{1}$ be the space of based loops,
$$\cD^{\prime} x\cD'\psi=\prod_{n=1}^{\infty}dc_{n}dc_{-n}\,\prod_{n=1}^{\infty}d\psi_{-n}d\psi_{n}$$
be the corresponding path integral measure on $\Pi T\Omega S^{1}$, and $x_{cl}(\tau)=\omega_{n}\tau$ be isolated geodesics in $\Omega S^{1}$.
Using functional analog of the elementary formula
$$\delta(f(x))=\sum_{f(x_{l})=0}\frac{\delta(x-x_{l})}{|f'(x_{l})|},$$
 translation invariance of the action, $\mathrm{Vol}(S^{1})=2\pi$, and formulas \eqref{det-real}--\eqref{pfaf-real}, we easily compute
\begin{align*}
I(\chi,\be)&=2\pi \lim_{s\to\infty}\bm{\int}_{ \Pi T\Omega S^{1}} e^{-S_{E}-s\del V}\cD' x\cD'\psi\\ 
&=2\pi \cdot (2\pi)^{\zeta(0)} \bm{\int}_{ \Pi T\Omega S^{1}} e^{-S_{E}[x,\psi]}\delta(\ddot{x})\delta(\psi)\mathrm{Pf}(-\pa^{3}_{\tau})\cD' x\cD'\psi\\ 
&=2\pi\cdot (2\pi)^{\zeta(0)} \bm{\int}_{\Omega S^{1}} e^{-S_{E}[x,0]}\sum_{x_{cl}}\frac{\delta(x-x_{cl})}{|\det(\pa^{2}_{\tau})|}\mathrm{Pf}(-\pa^{3}_{\tau})\cD' x\\
&=2\pi\cdot (2\pi)^{\zeta(0)}\sum_{x_{cl}}e^{-\frac{1}{2}\int_{0}^{\be}\dot{x}_{cl}^{2}d\tau}\frac{\mathrm{Pf}(-\pa^{3}_{\tau})}{\det(-\pa^{2}_{\tau})}\\
&=\sqrt{2\pi \ov \be}\sum_{n=-\infty}^{\infty}e^{-2\pi^2 n^2/\be}.
\end{align*}

\begin{remark} One can think of $2\zeta(0)=2\sum_{n=1}^{\infty}1=-1$ as the regularized dimension of $\Omega S^{1}$, so the prefactor $(2\pi)^{\zeta(0)}$ is analogous to the prefactor $(2\pi)^{\dim M/2}$ in the finite-dimensional Gaussian integration. 
\end{remark}

\section{Localization Proof of the Eskin Trace formula on $G$}\label{sec:G}
\subsection{Supersymmetric Particle on $G$} \label{SUSY-G} Here we describe classical and quantum aspects of the supersymmetric particle on a compact semi-simple Lie group $G$. We summarized some basic facts on $G$ in the appendix \ref{a-2}. As in Section \ref{new-susy-loc}, let $x^{\mu}(t)$ be local coordinates in the neighborhood of $g(t)$ for a path $g(t)$ in $G$, and let $\psi(t)\in\Pi T_{g(t)}G$ be a vector field along the path with the reverse parity. We consider the following Lagrangian $\LL$ of a free supersymmetric particle on $G$ endowed with the connection with totally anti-symmetric parallelizing torsion, 
\beq\label{L-CV-0}
\LL=\frac{1}{2}(\dot{x},\dot{x}) +\frac{i}{2}( \psi,\nabla^{-}_{\dot{x}}\psi),
\eeq
where $\nabla^{-}$ is a flat connection on $G$\footnote{Equivalently, we could start with the choice of the right-invariant flat connection $\nabla^+$.}, defined by the condition that the elements of $\frak{g}$ are parallel vector fields with the parallel transport given by the left translations $(L_{g})_{*}$. In terms of the local coordinates the Lagrangian takes the form
\beq\label{L-CV-1}
\LL=\frac{1}{2}g_{\mu\nu}\dot{x}^{\mu}\dot{x}^{\nu}+\frac{i}{2}g_{\mu\nu}(\psi^{\mu}\dot\psi^{\nu}+\Gamma^{\nu}_{\lambda\rho}\dot{x}^{\lambda}\psi^{\mu}\psi^{\rho}),
\eeq
where Christoffel symbols are given by 
 $\Gamma^{\nu}_{\rho\lambda}$ are \beq\label{C-symbol}
\Gamma^{\nu}_{\lambda\mu}=\theta^{\nu}_{a}\pa_{\lambda}\theta^{a}_{\mu}
\eeq
(see appendix \ref{a-2} for notations).
The supersymmetry transformation\footnote{Here we use the Lorentzian time and imaginary $i$ is needed to make quadratic expressions in $\psi$ ``real'', that is, 
$\overline{i\psi^{\mu}\psi^{\nu}}=-i\psi^{\nu}\psi^{\mu}=i\psi^{\mu}\psi^{\nu}$.}  is given by standard formulas \eqref{susy-standard},
\beq\label{susy}
\delta x^{\mu}=i\psi^{\mu},\quad \del \psi^{\mu}=-\dot{x}^{\mu},\quad\mu=1,\dots,n.
\eeq 

In terms of the current $J=g^{-1}\dot{g}=J^{a}T_{a}\in\frak{g}$ and $\psi=\psi^{a}T_{a}\in\Pi\frak{g}$, where $\psi^{a}=\theta^{a}_{\mu}\psi^{\mu}$, the supersymmetry transformation takes the following form 
\begin{align}
\delta g & =ig\psi,\label{g-susy}\\
\delta \psi &=-J-i\psi\psi, 
\label{psi-susy}\\
\delta J &=i(\pa_t +\text{ad}_J)\psi,\label{J-susy}
\end{align}
where the formula for $\del J$ is a direct consequence of the first two. In the formula \eqref{g-susy} it is understood that $g\psi=(L_{g})_{*}\psi$, and in \eqref{psi-susy}
$$\psi\psi=\frac{1}{2}\psi^{a}\psi^{b}(T_{a}T_{b}-T_{b}T_{a})=\frac{1}{2}[\psi,\psi].$$

Indeed, formula \eqref{g-susy} immediately follows from the first formula in \eqref{susy}, while for \eqref{psi-susy} we have the following simple computation, using \eqref{M-C}:
\begin{align*}
\delta \psi^{a} &=-\theta^{a}_{\mu}\dot{x}^{\mu}+i\pa_{\nu}\theta^{a}_{\mu}\psi^{\nu}\psi^{\mu}
=-J^{a}+\frac{i}{2}(\pa_{\nu}\theta^{a}_{\mu}-\pa_{\mu}\theta^{a}_{\nu})\psi^{\nu}\psi^{\mu}\\
&=-J^{a}-\frac{i}{2}f^{a}_{bc}\theta^{b}_{\nu}\theta^{c}_{\mu}\psi^{\nu}\psi^{\mu}
=-J^{a}-\frac{i}{2}f^{a}_{bc}\psi^{b}\psi^{c}\\
&=-J^{a}-i(\psi\psi)^{a}.
\end{align*}

In the Hamiltonian formalism, Legendre transform associated with the Lagrangian \eqref{L-CV-0} formally gives\footnote{Here $\dfrac{\pa}{\pa \dot{\psi}^{\mu}}$ stands for the left partial derivative in Grassmann variables.} 
\beq \label{Leg-transform}
\tilde{p}_{\mu}=\frac{\pa \LL}{\pa \dot{x}^{\mu}}=g_{\mu\nu}\dot{x}^{\nu}+\frac{i}{2}g_{\rho\nu}\Gamma^{\nu}_{\lambda\mu}\psi^{\rho}\psi^{\lambda} \quad\text{and}\quad \tilde\pi_{\mu}=\frac{\pa \LL}{\pa\dot{\psi}^{\mu}}= 
-\frac{i}{2}g_{\mu\nu}\psi^{\nu}
\eeq
as canonically conjugated variables to $x^{\mu}$ and $\psi^{\mu}$. However, we have constraints
$$\tilde\Phi_{a}=\tilde\pi_{\mu}+\frac{i}{2}g_{\mu\nu}\psi^{\nu}=0,$$
and one needs to use the  Dirac formalism of Poisson brackets with constraints (see, e.g., \cite{Rumpf:1981xh,Braden:1986zu}). Due to the simple nature of the flat connection $\nabla^{-}$, it is convenient to use variables $\psi^{a}$ instead of $\psi^{\mu}$.  Namely, using
\eqref{C-symbol}, we have
\beq\label{L-susy-0}
\LL=\frac{1}{2}g_{\mu\nu}\dot{x}^{\mu}\dot{x}^{\nu}+\frac{i}{2}g_{ab}\psi^{a}\dot\psi^{b},
\eeq
so defining
$$p_{\mu}=\frac{\pa \LL}{\pa \dot{x}^{\mu}}=g_{\mu\nu}\dot{x}^{\nu}\quad\text{and}\quad\pi_{a}=\frac{\pa \LL}{\pa\dot{\psi}^{a}}=-\frac{i}{2}g_{ab}\psi^{b}$$
we see that the fermion degrees of freedom decouple. The corresponding constraints are 
$$\Phi_{a}=\pi_{a}+\frac{i}{2}g_{ab}\psi^{b}=0,$$
and the matrix 
$$C_{ab}=\{\Phi_{a},\Phi_{b}\}=ig_{ab}$$
does not depends on the coordinates $x^{\mu}$. Thus in the Dirac formalism  the canonical coordinates of the reduced phase space are $p_{\mu}$, $x^{\mu}$ and $\psi^{a}$ 
with the following non-vanishing Poisson
brackets 
\beq\label{PB-0}
\{p_{\mu}, x^{\nu}\}=\delta^{\nu}_{\mu}\quad\text{and}\quad\{\psi^{a},\psi^{b}\}=ig^{ab}.
\eeq

For the corresponding Hamiltonian, we have from \eqref{L-susy-0}
\beq\label{susy-H-0}
H=\dot{x}^{\mu}p_{\mu}+\dot\psi^{a}\pi_{a}-\LL=\frac{1}{2}g^{\mu\nu}p_{\mu}p_{\nu},
\eeq
and as in Section \ref{free on G} we obtain
\beq\label{susy-H}
H=\frac{1}{2}g^{ab}l_{a}l_{b}.
\eeq
To establish the classical supersymmetry algebra, it is convenient to write
\beq \label{eq:laginv}
\mathcal L=\frac{1}{2}( J,J)+\frac{i}{2} (\psi,\dot\psi),
\eeq

where $(u,v)=g_{ab}u^{a}v^{b}$ and to represent $\mathrm{ad}$-invariance of the Killing form as
\beq\label{K-inv}
( [x,y],z)=( x,[y,z]),
\eeq
where $x,y,z\in\frak{g}$ can be even or odd elements. Also, when $x,y$ are odd elements of $\frak{g}$, we have 
$$[x,y]=[y,x]\quad \text{and}\quad[x,[x,x]]=0,$$
so
\beq\label{four-psi}
( [x,x],[x,x])=(x,[x,[x,x]])=0.
\eeq

Using equations \eqref{psi-susy}--\eqref{J-susy} and \eqref{K-inv}, we readily compute
\begin{align*}
\delta \LL &=\frac{1}{2}(\delta J,J)+\frac{1}{2}( J,\delta J) +\frac{i}{2}(\delta\psi,\dot\psi)-\frac{i}{2}(\psi,\delta\dot\psi)\\
&=\frac{1}{2}\frac{d}{dt}\left(i(\psi,J)-\frac{1}{6}(\psi,[\psi,\psi])\right).
\end{align*}

so by the Noether theorem, the supercharge $Q$ is given by
\begin{align*}
iQ &= \delta x^{\mu}\frac{\pa \LL}{\pa\dot{x}^{\mu}}+\delta\psi^{a}\frac{\pa \LL}{\pa\dot\psi^{a}}-\frac{1}{2}\left(i(\psi,J)-\frac{1}{6}(\psi,[\psi,\psi])\right)\\
&=i(\psi,J)-\frac{1}{6}(\psi,[\psi,\psi]).
\end{align*}

Using the Legendre transform, we get the following formula for the supercharge in the phase space
\beq\label{Q-phase}
Q=(\psi,L)+\frac{i}{6}(\psi,[\psi,\psi]),
\eeq
where, according to \eqref{e-functions}, 
$$L=l^{a}T_{a},\quad l^{a}=g^{ab}l_{b}.$$
It follows from \eqref{PB-II}  that
$$\{Q,r_{a}\}=0.$$
The following result manifests classical supersymmetry.
\begin{lemma} \label{Q-H-cl} The following classical supersymmetry algebra relation holds
\beq\label{Q-H}
\{Q,Q\}=2iH.
\eeq
\end{lemma}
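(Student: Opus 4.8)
The plan is to verify \eqref{Q-H} by computing the graded Poisson bracket $\{Q,Q\}$ directly in the reduced phase space of Section~\ref{SUSY-G}, using the canonical relations \eqref{PB-0}, the fact that the moment--map components $l_a$ are functions of $x^\mu,p_\mu$ only --- so that $\{l_a,\psi^b\}=0$, the fermions having decoupled --- and the Lie--algebra relations $\{l_a,l_b\}=f^{c}_{ab}l_c$ established in the appendix. I would split the supercharge \eqref{Q-phase} as $Q=Q_1+Q_2$ with
\[
Q_1=(\psi,L)=\psi^a l_a,\qquad Q_2=\tfrac{i}{6}(\psi,[\psi,\psi])=\tfrac{i}{6}f_{abc}\,\psi^a\psi^b\psi^c,
\]
where $f_{abc}=(T_a,[T_b,T_c])$ is totally antisymmetric. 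Since $Q_1$ and $Q_2$ are both odd, the graded bracket is symmetric in its two arguments, so $\{Q,Q\}=\{Q_1,Q_1\}+2\{Q_1,Q_2\}+\{Q_2,Q_2\}$, and I would evaluate the three contributions in turn.

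For the first, a short application of the graded Leibniz rule using $\{\psi^a,\psi^b\}=ig^{ab}$, $\{l_a,\psi^b\}=0$ and $\{l_a,l_b\}=f^{c}_{ab}l_c$ yields
\[
\{Q_1,Q_1\}=i\,g^{ab}l_a l_b+f^{c}_{ab}\,\psi^a\psi^b\,l_c=2iH+f^{c}_{ab}\,\psi^a\psi^b\,l_c,
\]
i.e.\ the expected $2iH$ (using \eqref{susy-H}) plus an ``anomalous'' cubic term reflecting the non-abelian nature of the $l_a$. For the second term, $\{Q_1,Q_2\}=\{\psi^a,Q_2\}\,l_a$, and the key computational point is that $\{\psi^a,f_{bcd}\psi^b\psi^c\psi^d\}$ collapses, by total antisymmetry of $f_{bcd}$, to a single term: the three pieces produced by the Leibniz rule coincide, so that $\{\psi^a,Q_2\}=-\tfrac12 f^{a}_{cd}\psi^c\psi^d=-\tfrac12[\psi,\psi]^a$. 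Hence $2\{Q_1,Q_2\}=-f^{c}_{ab}\psi^a\psi^b l_c$, which precisely cancels the anomalous term in $\{Q_1,Q_1\}$; it is exactly the coefficient $\tfrac{i}{6}$ in $Q_2$ that makes this cancellation work.

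Finally, I would show $\{Q_2,Q_2\}=0$. Applying $\{\psi^a,Q_2\}=-\tfrac12[\psi,\psi]^a$ and the graded Leibniz rule once more, $\{Q_2,Q_2\}$ reduces (again by the coincidence of the three Leibniz terms) to a single multiple of $f_{abc}[\psi,\psi]^a\psi^b\psi^c$; reordering the even bilinears $[\psi,\psi]^a$ identifies this with $([\psi,\psi],[\psi,\psi])$, which vanishes by \eqref{four-psi} --- equivalently, because $[\psi,[\psi,\psi]]=0$ for the odd element $\psi$ together with the ad--invariance \eqref{K-inv}. Collecting the three pieces gives $\{Q,Q\}=2iH$, which is \eqref{Q-H}.

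The difficulty here is combinatorial rather than conceptual: one must keep the Grassmann signs in the graded Leibniz rule straight --- in particular, the bracket of two odd functions is symmetric, not antisymmetric --- and be careful with raising and lowering indices on the structure constants, so that the anomalous term in $\{Q_1,Q_1\}$ and the cross term $2\{Q_1,Q_2\}$ are manifestly exact negatives of each other. A minor caveat is that the computation uses the sign convention $\{l_a,l_b\}=+f^{c}_{ab}l_c$ from the appendix; but this is not a genuine choice, as the $l_a$ are the components of the equivariant moment map of the $G$--action on $T^*G$.
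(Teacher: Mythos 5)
Your proposal is correct and follows essentially the same route as the paper: both expand $\{Q,Q\}$ into the linear-linear, linear-cubic cross, and cubic-cubic pieces, observe the cancellation of $f^{c}_{ab}\psi^{a}\psi^{b}l_{c}$ between the first two, and invoke the identity \eqref{four-psi} to kill the cubic-cubic term. Your computation $\{\psi^{a},Q_{2}\}=-\tfrac12[\psi,\psi]^{a}$ and the explicit bookkeeping of the anomalous term are exactly what underlies the paper's displayed calculation.
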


\begin{proof}
Indeed, we have, using \eqref{PB-0}, \eqref{susy-H} and \eqref{four-psi},
\begin{align*}
\{Q,Q\} & =\{\psi^{a}l_{a},\psi^{b}l_{b}\} 
+\frac{i}{6}\left(l_{a}\{\psi^{a},(\psi,[\psi,\psi])\} + l_{a}\{(\psi,[\psi,\psi]),\psi^{a}\}\right)\\
&\quad-\frac{1}{36}\{(\psi,[\psi,\psi]),(\psi,[\psi,\psi])\}\\
&=ig^{ab}l_{a}l_{b}+( [\psi,\psi],L) -( [\psi,\psi],L) -\frac{1}{4}( [\psi,\psi],[\psi,\psi])\\
&=2i H.\qedhere
\end{align*}
\end{proof}
Since bosonic and fermionic degrees of freedom in the Lagrangian \eqref{L-susy-0} are totally decoupled, quantization of the classical system is straightforward. 

First, the Hilbert space is
\beq
\cH= L^2(G)\otimes \cH_{F},
\eeq
where, according to the Remark \ref{even-odd}, $\cH_{F}$ is irreducible Clifford algebra module of dimension $2^{n/2}$.

Second, classical variables $l_{a}$ and $\psi^{a}$ are replaced by quantum operators $\hat{l}^{a}$ and $\hat\psi^{a}$ acting on $\cH$ and satisfying the following graded commutation relations
\beq \label{eq:gcan}
[\hat \psi^a,\hat \psi^b]= g^{ab}\quad\text{and}\quad
[\hat l_a, \hat l_a]=-i f^c_{ab}\hat l_c
\eeq
--- quantization of the graded Poisson brackets in Appendices \ref{appsec:symp} and \ref{free on G}. In particular, $\hat{l}_{a}=-ie_{a}$, where $e_{a}$ are left-invariant vector fields on $G$, realized as the first order differential operators on $G$. Correspondingly, 
\beq\label{laplace}
\Delta=g^{ab}\hat{l}_{a}\hat{l}_{b}=-g^{ab}e_{a}e_{b}
\eeq
is the Laplace operator on $G$, the quadratic Casimir operator $C_{2}$ on the universal enveloping algebra $U\frak{g}$, realized as algebra of differential operators on $L^{2}(G)$.

The quantum supercharge $\hat{Q}$ is defined unambiguously (i.e., there is no problem of ordering the quantum operators) by the following formula
\beq \label{eq:qg}
\hat Q=\hat \psi^a \hat l_a+{i\ov 6}f_{abc}\hat \psi^a\hat \psi^b\hat \psi^c,\quad f_{abc}=g_{ae}f^{e}_{bc}.
\eeq
Since $\hat{l}_{a}$ and $\hat{r}_{a}=-if_{a}$ commute, where $f_{a}$ are right-invariant vector fields on $G$, realized as the first order differential operators on $G$, we have
$$[\hat{Q},\hat{r}_{a}]=0.$$

The following result is fundamental.
\begin{proposition} Defined by the supersymmetry algebra the quantum Hamiltonian $\hat{H}=\hat{Q}^{2}$ is given by
\beq \label{eq:hs}
\hat H
=\frac{1}{2}g^{ab}\hat l_{a}\hat l_{b}+{1\ov 48}f_{abc}f^{abc}\hat I={1\ov 2}\Delta+{R\ov 12}\hat I,
\eeq
where $\Delta$ is the Laplace operator on $L^2(G)$ 
and
$$R={1\ov 4}f_{abc}f^{abc}=\frac{n}{4}$$ 
is the scalar curvature of the Cartan-Killing metric on $G$.
\end{proposition}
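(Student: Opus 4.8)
The plan is to compute $\hat H=\hat Q^{2}$ directly from \eqref{eq:qg} by splitting $\hat Q=A+B$ with $A=\hat\psi^{a}\hat l_{a}$ and $B=\frac{i}{6}f_{abc}\hat\psi^{a}\hat\psi^{b}\hat\psi^{c}$, so that $\hat Q^{2}=A^{2}+[A,B]+B^{2}$ (graded commutator). This is the quantum counterpart of the classical computation of Lemma \ref{Q-H-cl}: the same cancellations reappear, the only genuinely new feature being the $c$-number produced by operator reordering, which will turn out to be $R/12$.

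First I would treat $A^{2}$. Since $\hat\psi^{a}$ acts on $\cH_{F}$ and $\hat l_{b}$ on $L^{2}(G)$ they commute, so $A^{2}=\hat\psi^{a}\hat\psi^{b}\,\hat l_{a}\hat l_{b}$; writing $\hat\psi^{a}\hat\psi^{b}=\frac12 g^{ab}+\frac12(\hat\psi^{a}\hat\psi^{b}-\hat\psi^{b}\hat\psi^{a})$ and using $[\hat l_{a},\hat l_{b}]=-if^{c}_{ab}\hat l_{c}$ from \eqref{eq:gcan} together with \eqref{laplace} gives
\[
A^{2}=\tfrac12\Delta-\tfrac{i}{2}f^{c}_{ab}\hat\psi^{a}\hat\psi^{b}\hat l_{c}.
\]
For the cross term, graded Leibniz together with the commutation of $\hat l_{a}$ with the fermions yields $[A,B]=\hat l_{a}[\hat\psi^{a},B]$, and the Clifford identity $[\hat\psi^{a},\hat\psi^{b}\hat\psi^{c}\hat\psi^{d}]=g^{ab}\hat\psi^{c}\hat\psi^{d}-g^{ac}\hat\psi^{b}\hat\psi^{d}+g^{ad}\hat\psi^{b}\hat\psi^{c}$, contracted against $\frac{i}{6}f_{bcd}$, becomes $\frac{i}{2}f^{a}{}_{cd}\hat\psi^{c}\hat\psi^{d}$ by total antisymmetry of $f_{bcd}$; hence $[A,B]=+\frac{i}{2}f^{c}_{ab}\hat\psi^{a}\hat\psi^{b}\hat l_{c}$, which cancels the $\hat l$-linear term of $A^{2}$ exactly --- the quantum echo of the cancellation of the $([\psi,\psi],L)$ terms in the proof of Lemma \ref{Q-H-cl}.

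The substantive step is $B^{2}=-\frac1{36}f_{abc}f_{def}\,\hat\psi^{a}\hat\psi^{b}\hat\psi^{c}\hat\psi^{d}\hat\psi^{e}\hat\psi^{f}$, which I would reduce to the antisymmetrized Clifford basis. The six-fermion component is proportional to $f_{[abc}f_{def]}$, a component of $\omega\wedge\omega$ for the Cartan $3$-form $\omega$, and vanishes since an odd-degree form wedges to zero with itself. The four-fermion component collects into $f_{[ab}{}^{g}f_{cd]g}$ and vanishes by the Jacobi identity. The two-fermion component drops out because its coefficient is symmetric in the two surviving indices (a multiple of $g$ by $\mathrm{ad}$-invariance). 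Only the scalar survives: the nonvanishing Wick pairings match each of $a,b,c$ with one of $d,e,f$, and by total antisymmetry of $f_{def}$ all $3!$ of them contribute equally, giving $B^{2}=\frac1{48}f_{abc}f^{abc}\hat I$. I expect this $B^{2}$ computation --- in particular the appeal to the Jacobi identity to kill the four-fermion terms and the bookkeeping that pins down the coefficient $\frac1{48}$ --- to be the main obstacle; the rest is routine Clifford algebra.

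Collecting the three pieces gives $\hat H=\frac12\Delta+\frac1{48}f_{abc}f^{abc}\hat I$, which is \eqref{eq:hs} with $R:=\frac14 f_{abc}f^{abc}$, since then $\frac1{48}f_{abc}f^{abc}=\frac{R}{12}$. Finally I would identify $R$ with the scalar curvature and evaluate it using the normalization of the Cartan--Killing metric from Appendix \ref{a-2}: in a $g$-orthonormal frame the $f_{abc}$ are totally antisymmetric and the Killing-form identity reads $\sum_{c,d}f_{acd}f_{bcd}=g_{ab}$, whence $\mathrm{Ric}=\frac14 g$ and $R=g^{ab}\mathrm{Ric}_{ab}=\frac14\sum_{a,b,c}f_{abc}^{2}=\frac14 f_{abc}f^{abc}=\frac n4$. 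This completes the proof.
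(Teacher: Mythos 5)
Your proof is correct and follows essentially the same route as the paper: you decompose $\hat Q^{2}=A^{2}+[A,B]+B^{2}$ (the paper equivalently writes $\hat Q^{2}=\tfrac12[\hat Q,\hat Q]$), establish the same three graded-commutator identities, observe the cancellation of the $\hat l$-linear cross terms, and isolate the $c$-number $\tfrac1{48}f_{abc}f^{abc}$ coming from $B^{2}$. The only difference is that you spell out in detail the vanishing of the six-, four- and two-fermion pieces of $B^{2}$ (via $\omega\wedge\omega=0$ for the Cartan $3$-form, the Jacobi identity, and $\mathrm{ad}$-invariance) whereas the paper simply records the resulting identity $[f_{abc}\hat\psi^{a}\hat\psi^{b}\hat\psi^{c},f_{def}\hat\psi^{d}\hat\psi^{e}\hat\psi^{f}]=-\tfrac32 f_{abc}f^{abc}$.
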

\begin{proof}
From the commutation relations in \eqref{eq:gcan} we readily obtain
\begin{align*}
 \,[\hat \psi^a \hat l_a,\hat \psi^b\hat l_b]&=g^{ab} \hat l_a \hat l_b-if_{ab}^c \hat l_c\hat \psi^a \hat \psi^b,
\\ [\hat \psi^a \hat l_a,f_{bcd}\hat \psi^b\hat \psi^c\hat \psi^d]&=3f^{a}_{bc}\hat l_a \hat \psi^b \hat \psi^c,
\\ [f_{abc}\hat \psi^a\hat \psi^b\hat \psi^c,f_{def}\hat \psi^d\hat \psi^e\hat \psi^f]&=-{3\ov 2}f_{abc}f^{abc}.
\end{align*}
Using these relations, it is straightforward to verify that
\begin{align*}
\hat{Q}^{2} &={1\ov 2}\,[\hat Q,\hat Q ]={1\ov 2}[\hat \psi^a \hat l_a+{i\ov 6}f_{abc}\hat \psi^a\hat \psi^b\hat \psi^c,\hat \psi^d\hat l_d+{i\ov 6}f_{def}\hat \psi^d\hat \psi^e\hat \psi^f]\\&={1\ov 2}g^{ab}\hat l_a \hat l_b+{1\ov 48}f_{abc}f^{abc}\hat I.\qedhere
\end{align*}
\end{proof}
\begin{remark} It should be noted that in representation theory the quantum supercharge $\hat{Q}$ is just the Kostant's cubic Dirac operator on $G$, introduced in \cite{kostant1999cubic} (see also \cite{alekseev2000non}). According to \cite{bismut1989local}, it is just a $G$-invariant Dirac operator, associated with the invariant flat connection $\nabla^-$.
In the physics literature, such operator was first obtained by H.W. Braden \cite{Braden:1986zu} for canonical quantization of the $N=1/2$ supersymmetric sigma model. Our elementary derivation of the supersymmetry algebra (cf. with \cite{Braden:1986zu}, \cite{bismut2008hypoelliptic}) is possible due to commutation relations \eqref{eq:gcan}.
\end{remark}
\begin{remark}
It is quite remarkable that proportional to the scalar curvature innocuously looking constant term in $\hat H$  is precisely the well-known `notorious' DeWitt term for the bosonic quantum particle on curved manifold \cite{dewitt1957dynamical}. It is proportional to $\hbar^{2}$ quantum correction to the action in the path integral for the propagator of a pure bosonic quantum particle moving on a curved manifold, and is obtained by a tedious analysis of the discretized path integral. It is really striking that such term naturally appears through the quantum supersymmetry algebra!  This clearly supports the idea of using supersymmetry to derive pure bosonic trace formulas, and will be a crucial ingredient in the derivation of the trace formula on $G$. 
\end{remark}
\subsection{Eskin's Trace Formula on $G$}\label{E on G}
The supersymmetric particle on $G$ with the connection $\nabla^-$ considered satisfies all conditions of our new localization principle formulated in Section \ref{new-susy-loc}. Namely, the system has $n$ fermion zero modes
$$\chi^a={1\ov \be}\int_0^\be  \psi^a(\tau) d\tau$$
with respect to some basis $T_{1},\dots, T_{n}$ of $\frak{g}$, which we can choose to be orthonormal with respect to the Cartan-Killing form.
According to Remark \ref{hat-chi}, the zero modes in the path integral are saturated by the insertion of operators $\hat \chi^a=\hat\psi^a(0)$ to the spectral side of the formula \eqref{Tr=Str}.

However, it is more natural to use the Cartan-Weyl basis (see Appendix \ref{appsec:g.1}) for the description of the fermion number operator $(-1)^F$.
Namely, we have 
\beq
\psi(0)=\psi^{a}(0)T_{a} =  \sum_{j }\psi_{j}\cdot iH_j +\sum_{\al \in R}  \psi_{\al}E_\al,
\eeq
where $i=\sqrt{-1}$ and $\bar\psi_{j}=\psi_{j}$, $\bar\psi_{\al}=-\psi_{-\al}$. Corresponding fermion operators 
satisfy anti-commutation relations
$$ [\hat \psi_i,\hat \psi_{j}]=\del_{ij},\quad [\hat \psi_\al,\hat \psi_{\be}]=-\del_{\al,-\be},$$
where $\hat\psi^{\dagger}_{i}=\hat\psi_{i}$ and $\hat\psi^{\dagger}_{\al}=-\hat\psi_{-\al}$. Introduce Hermitian fermion operators $\hat\chi_{j}=\hat\psi_{j}$ and $\hat\chi_{\al}$, $\hat\chi_{-\al}$ by the following formulas
$$ \hat\chi_{\al}=\frac{1}{\sqrt{2}}(\hat\psi_{\al}-\hat\psi_{-\al}), \quad  \hat\chi_{-\al}=\frac{1}{i\sqrt{2}}(\hat\psi_{\al}+\hat\psi_{-\al}),\quad\text{where}\quad \al\in R_{+}.$$
 The operators $\{ \hat \chi_j , \hat\chi_{\al} , \hat\chi_{-\al}\}$ satisfy canonical relations \eqref{cac}. For each two-dimensional fermion Hilbert subspace generated by $\hat\psi_\al$ and $\hat \psi_{-\al}$, we have a natural fermion number operator 
$$(-1)^{F_\al}
=2i\hat\chi_{\al} \hat\chi_{-\al},\quad\al\in R_{+}.$$
Thus the fermion number naturally associated with the Cartan-Weyl basis is
\beq \label{eq:GF}
c_n\hat \psi^1\dots \hat \psi^n=  c_r \hat \chi_1 \dots \hat \chi_r   \prod_{\al \in R_+}i \hat\chi_{\al} \hat\chi_{-\al}=2^{-n/2}(-1)^{F},
\eeq
where the phase $c_r$ is such that \eqref{F-square} holds. So according to \eqref{eq:hs},  bosonic and fermionic degrees of freedom are totally decoupled and
$$\text{Str}\Big(c_r \hat \chi_1 \dots \hat \chi_r   \prod_{\al \in R_+}i \hat\chi_{\al} \hat\chi_{-\al}e^{-\be \hat H}\Big)=e^{-\frac{1}{12}\be R} \Tr e^{-\frac{1}{2}\be \Delta}.$$

Since $\Delta$ commutes with left and right translations, one can also express the heat kernel on $G$ as a supertrace. Namely, we recall that the heat kernel is a fundamental solution  $K_{\tau}(g_{1},g_{2})$  of the heat equation on $G$
$$\frac{\pa K}{\pa\tau}=-\frac{1}{2}\Delta K$$
with respect to $g_{1}$, satisfying
$$\lim_{\tau\to 0}K_{\tau}(g_{1},g_{2})=\delta_{G}(g_{1}g^{-1}_{2}),$$
where $\del_{G}$ is the Dirac delta-function on $G$ with respect to the Cartan-Killing volume form. Fix a Cartan subgroup $T$ in $G$ and corresponding Cartan subalgebra $\frak{t}$ in $\frak{g}$, $\dim\frak{t}=r$, the rank of $\frak{g}$. It follows from the bi-invariance of the heat kernel that it only depends on $g_{1}g^{-1}_{2}\in T$, and we will denote it by $k_{t}(e^{h})$, where $h\in\frak{t}$ and $e^{h}\in T$. Using Dirac notation,
$$k_{\tau}(e^{h})=K_{\tau}(e^{h},1)=\la e^{h}|e^{-\frac{1}{2}\tau\Delta}|1\ra,$$
where $1$ is the identity element in $G$.
 
 Correspondingly, $ \Tr e^{-\frac{1}{2}\be \Delta}=V_{G}k_{\beta}(1)$, where $V_{G}$ is the volume of $G$, and more generally
 $$ \Tr e^{-\frac{1}{2}\be \Delta +i( h,\hat{r})}=V_{G}k_{\beta}(e^{h}),$$
 where $\hat{r}=\hat{r}^{a}T_{a}$. The extra term $i( h,\hat{r})$ in the exponent can be thought as a (imaginary) `chemical potential'  added to the Hamiltonian $\hat{H}$. Since the operators $\hat{r}^{a}$ commute with $\hat{Q}$, we have
 \beq \label{eq:eskinstr}
\text{Str}\Big( c_r \hat \chi_1 \dots \hat \chi_r   \prod_{\al \in R_+}i \hat\chi_{\al} \hat\chi_{-\al}e^{-\be \hat H +i( h,\hat{r}) }\Big)=V_G e^{-\frac{1}{12}\be R}k_{\be}(e^{h}). 
\eeq
Here the supertrace is given by the following path integral
\beq \label{eq:strpi}
\int \limits_{\Pi TL G}  \chi_1 \dots \chi_r  \prod_{\al \in R_+} i \chi_{\al} \chi_{-\al} \,e^{-S_E^{h}}\, \cD g \cD\psi\, ,
\eeq
with the Euclidean action
\beq 
S_E^{h}= {1\ov 2} \int_0^\be \left((J,J)+(\psi, \dot \psi)\right)d\tau + {1\ov \be}\int_{0}^{\be}(\mathrm{Ad}_{g^{-1}}h,J)d\tau+{1\ov 2\be}(h,h)
\eeq
and $G$-invariant `measure' $\cD g$. 
The fermion `measure' $\cD\psi$ is determined from the condition
\beq\label{fermi-measure}
\begin{gathered}
1=\Tr_{\mathscr H_F}\Big(c_r \hat \chi_1 \dots \hat \chi_r   \prod_{\al \in R_+}i \hat\chi_{\al} \hat\chi_{-\al}\,(-1)^{F}\Big)\\=\int \limits_{\Pi TL G}  \chi_1 \dots \chi_r  \prod_{\al \in R_+} i \chi_{\al} \chi_{-\al} \,e^{-\frac{1}{2}\int_{0}^{\be}(\psi,\dot\psi)d\tau}\, \cD\psi, 
\end{gathered}
\eeq
which follows from  \eqref{eq:GF}.

Indeed, using the Cartan-Weyl basis we have,
\beq\label{psi}
\psi(\tau)=  \sum_{n=-\infty}^{\infty}\sum_{j=1}^{r}\psi_{j,n}\cdot i H_j e^{i\omega_n \tau}+\sum_{n=-\infty}^{\infty}\sum_{\al\in R}\psi_{\al,n}E_\al e^{i\omega_n\tau},
\eeq
where $\overline{\psi}_{j,n}=\psi_{j,-n}$, $\overline{\psi}_{\al,n}=-\psi_{-\al,-n}$, so
$$-\frac{1}{2}\int_{0}^{\be}(\psi,\dot\psi)d\tau=\sum_{n=1}^{\infty}\sum_{j=1}^{r}i\be\omega_{n}\psi_{j,n}\overline{\psi}_{j,n}+\sum_{n=-\infty}^{\infty}\sum_{\al\in R_+}i\be\omega_{n}\psi_{\al,n} \overline{\psi}_{\al,n}$$
Now if the fermion measure is chosen to be
\beq \label{eq:fmG}
\mathscr D\psi=(-1)^{r(r-1)/2}\left( \prod_{j=1}^r d\psi_{j,0} \right)\left(\prod_{j=1}^r \prod_{n=1}^\infty id\psi_{j,n} d\overline{\psi}_{j,n}  \right) \left(\prod_{\al\in R_+} \prod_{n\in \mathbb Z}  id\psi_{\al,n} d\overline{\psi}_{\al,n}\right),
\eeq
then it follows from \eqref{psi} then the only terms in $\chi_1 \dots \chi_r  \prod_{\al \in R_+} i \chi_{\al} \chi_{-\al}$ that give non-zero contribution to the path integral come from $\psi_{j,0}$ 
and $\psi_{\al,0}$ so using the rules of the fermion integration and the formula
$$\prod_{n=1}^{\infty}(2\pi n)=1,\quad i \prod_{n=1}^{\infty}(-2\pi n)=1,$$
where the 2nd identity comes from
\beq \label{eq:-1}
\prod_{n=1}^\infty z =e^{\zeta(0)\log z}= 1/\sqrt{z},\quad -\pi <\arg z\leq \pi,
\eeq
we get \eqref{fermi-measure}.


Now let us return to the full path integral \eqref{eq:strpi}.
Because of the additional term $(\mathrm{Ad}_{g^{-1}}h,J)$ in the action, it is easy to verify that $S_E^{h}$ is invariant under the
 modified supersymmetry transformation $\del_{h}$, which in the Euclidean time for fixed $h\in\frak{t}$ has the form
\begin{align*}
\del_{h} g &=g\psi,\\
\del_{h}\psi &=-J^{h}-\psi\psi,\\
\del_{h} J^h &=(\pa_\tau+\text{ad}_{J^{h}})\psi,
\end{align*}
where $J^{h}=J+\dfrac{1}{\be}\text{Ad}_{g^{-1}}h$. 

The spectral representation of $k_{\be}(e^{h})$ is well-known since $\Delta=C_{2}$, the quadratic Casimir operator. Namely, we have
\beq\label{spectral-trace-1}
k_{\be}(e^{h})=\frac{1}{V_{G}}\sum_{\lambda\in\mathrm{Irrep}\, G}d_{\lambda}\,\chi_{\lambda}(h)e^{-\frac{1}{2}\be C_{2}(\lambda)},
\eeq
where $\chi_{\lambda}$ is the character for an irreducible representation $\lambda$ of $G$,
$d_{\lambda}$ is its dimension, and $C_{2}(\lambda)$ is the eigenvalue of $C_{2}$. We have (see Appendix \ref{appsec:g.1} for notation)

$$C_{2}(\lambda)=\la\lambda +\rho,\lambda+\rho\ra -\la\rho,\rho\ra,\quad\text{where}\quad \rho=\frac{1}{2}\sum_{\al\in R_{+}}\al\in P_{+},$$
where $R_{+}$ is the set of positive roots and $P_{+}$ is the set of dominant weights.
Using Weyl dimension formula
$$d_{\lambda}=\prod_{\al\in R_{+}}\frac{\la\lambda+\rho,\al\ra}{\la\rho,\al\ra},$$
one can rewrite \eqref{spectral-trace-1} as
\beq\label{spectral-trace-2}
k_{\be}(e^{h})=\frac{1}{V_{G}}e^{\frac{1}{2}\beta\la\rho,\rho\ra}\sum_{\lambda\in P_{+}}\chi_{\lambda}(h)\prod_{\al\in R_{+}}\frac{\la\lambda+\rho,\al\ra}{\la\rho,\al\ra}e^{-\frac{1}{2}\be (\lambda+\rho,\lambda+\rho)}.
\eeq
The Freudenthal-de Vries's `strange formula' 
$$\ex{\rho,\rho}=\dfrac{n}{24},$$ 
ensures that the prefactor in \eqref{spectral-trace-2} exactly cancels the factor $e^{-\frac{1}{2}\be R}=e^{-\frac{1}{48}n\be}$ in  \eqref{eq:eskinstr}!

It is remarkable, that there is another representation for $k_{\be}(e^{h})$ as a sum over so-called characteristic lattice, obtained by L.D. Eskin \cite{MR0206535}. Namely, suppose (which is the main assumption)  that Cartan element $h$ is regular, which means that $\la h,\al\ra\notin 2\pi i\ZZ$ for any root $\al\in R$. Then one has the following representation
\beq\label{co-root-formula}
k_{\be}(e^{h})=\frac{e^{\frac{1}{2}\beta\la\rho,\rho\ra}}{(2\pi \be)^{n/2}}\sum_{\ga \in\Gamma}\prod_{\al\in R_+}{\frac{1}{2}\ex{\al,h+\gamma}\ov \sinh\frac{1}{2}\ex{\al,h+\gamma }}e^{-{ \frac{1}{2\be}(h+\gamma,h+ \gamma)}},
\eeq
where $\Gamma=\{\gamma\in \mathfrak t :e^\gamma=1\}$ is the characteristic lattice\footnote{In case $G$ is simple and simply connected, $\Gamma=2\pi iQ^{\vee}$ where $Q^\vee$ is a coroot lattice.}, which is related to the maximal torus of $G$ by $T=\frak{t}/\Gamma$. Comparing formulas \eqref{spectral-trace-2} and \eqref{co-root-formula} yields Eskin trace formula\footnote{Note that Eskin was using the convention that $R\subset\frak{t}$, which explains the difference between
hyperbolic sine functions in \eqref{co-root-formula} and trigonometric sine functions in \cite{MR0206535}.}.

Up to a DeWitt term $e^{\frac{1}{2}\beta\la\rho,\rho\ra}$, the right hand side in  \eqref{co-root-formula} exhibits a nature of the exact semi-classical approximation with geodesics $\exp((h+\ga) \tau/\be)$, $\ga\in\Gamma$, connecting $1$ and $e^{h}$. 
The regularity condition ensures that each denominator in \eqref{co-root-formula} is non zero, since for $\ga\in\Gamma$ 
\begin{equation}\label{prod-al-ga}
\prod_{\al \in R_+}\sinh \tfrac{1}{2}\la\al,h+\ga\ra = \pm\prod_{\al \in R_+}\sinh \tfrac{1}{2}\la\al,h\ra.
\end{equation}  

Here we show that the Eskin trace formula can be obtained very naturally as a result of supersymmetric localization of the path integral \eqref{eq:strpi}. Namely, we have the following result.

\begin{theorem} \label{E-TF} We have the following exact localization formula
 \begin{gather*}
 \int \limits_{\Pi TL G} \chi_1 \dots \chi_r   \prod_{\al \in R_+} {\chi}_{-\al} \chi_{\al}e^{-S_E^{h}}\, \cD g \cD\psi \\
 =\frac{V_{G}}{(2\pi \be)^{n/2}}\sum_{\ga \in\Gamma}\prod_{\al\in R_+}{\frac{1}{2}\ex{\al,h+\gamma}\ov \sinh\frac{1}{2}\ex{\al,h+\gamma }}e^{-{ \frac{1}{2\be}(h+\gamma,h+ \gamma)}}.
 \end{gather*}
\end{theorem}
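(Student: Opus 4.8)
The plan is to apply Proposition \ref{new-loc} to the supersymmetric particle on $G$ with the flat connection $\nabla^-$, using the modified supersymmetry $\del_h$ and an $h$-dependent higher-derivative deformation $V$, and then carry out the resulting Gaussian fermionic and semiclassical bosonic integrals. First I would verify that all hypotheses of the new localization principle hold for the action $S_E^h$: the modified transformation $\del_h$ squares to (a multiple of) the generator of the $\be$-translation combined with the adjoint $h$-rotation, the zero modes $\chi^a = \be^{-1}\int_0^\be \psi^a\,d\tau$ satisfy \eqref{zm-ortog} (because $\del_h\chi^a$ involves $J^h + \psi\psi$, integrated, which contains no $\chi^a$), and the insertion $\hat\chi^1\cdots\hat\chi^n$ is proportional to $(-1)^F$ so that \eqref{eq:eskinstr} identifies the path integral with $V_G e^{-\be R/12} k_\be(e^h)$. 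Then I would introduce the deformation, the natural $G$-covariant analogue of \eqref{V-S1}, something like $V = -\tfrac12\int_0^\be (\psi, \nabla^-_\tau J^h + \text{connection terms})\,d\tau$ chosen so that $\del_h V$ has bosonic part $\tfrac12\int_0^\be |\nabla^-_\tau J^h|^2\,d\tau$ (up to curvature corrections), and check conditions (A) $\del_h^2 V = 0$ and (B) $V, \del_h V \perp \chi^a$ — the latter forcing the explicit $\dot\psi$-dependence.

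Next, by Proposition \ref{new-loc} the path integral equals $\lim_{s\to\infty}\int \chi^1\cdots\chi^n e^{-S_E^h - s\,\del_h V}\,\cD g\,\cD\psi$. As $s\to\infty$ the bosonic integral localizes onto the locus $\nabla^-_\tau J^h = 0$, i.e. $\partial_\tau(g^{-1}\dot g) + \tfrac1\be[g^{-1}\dot g, \mathrm{Ad}_{g^{-1}}h] + \cdots = 0$; using that $\nabla^-$ is the flat left-invariant connection, parallel curves are one-parameter subgroups, and together with the boundary/periodicity data (the loop must close up to the chemical-potential twist by $h$) the solutions are precisely $g_\ga(\tau) = \exp\!\big((h+\ga)\tau/\be\big)$ for $\ga \in \Gamma$, the characteristic lattice. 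I would compute the classical action $S_E^h[g_\ga] = \tfrac1{2\be}(h+\ga, h+\ga)$ on each. The fermionic integral, with the $\chi^a$ insertion soaking up the $n$ zero modes, reduces to a Gaussian over the non-zero fermionic modes of the quadratic form appearing in $\del_h V$; this produces a Pfaffian which, combined with the bosonic one-loop determinant of the Jacobi operator around $g_\ga$, gives the ratio $\mathrm{Pf}/\det$. The Jacobi operator around the subgroup $\exp((h+\ga)\tau/\be)$ is (after trivializing $TG$ by left translation and using flatness of $\nabla^-$) a constant-coefficient operator built from $\partial_\tau + \tfrac1\be\,\mathrm{ad}_{h+\ga}$, whose spectrum decomposes over root spaces; the regularized determinant over each root space $\al$ yields exactly the factor $\sinh\tfrac12\la\al, h+\ga\ra$ in the denominator (and the numerator $\tfrac12\la\al, h+\ga\ra$ comes from the Cartan directions, or equivalently from the zero-mode volume normalization), via the standard identity $\prod_{k\ge 1}(1 + z^2/(2\pi k)^2) = \sinh(z/2)/(z/2)$ together with the $\zeta$-regularized prefactor $(2\pi\be)^{-n/2}$, as in the $S^1$ computation.

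Assembling the pieces: the sum over $\ga\in\Gamma$ of $e^{-\frac1{2\be}(h+\ga,h+\ga)}\cdot(2\pi\be)^{-n/2}\cdot V_G\cdot\prod_{\al\in R_+}\tfrac12\la\al,h+\ga\ra/\sinh\tfrac12\la\al,h+\ga\ra$. To match the theorem one must also account for the factor $e^{-\be R/12}$ on the left side of \eqref{eq:eskinstr} versus the prefactor $e^{\frac12\be\la\rho,\rho\ra}$ on the right of \eqref{co-root-formula}: these are reconciled exactly by the Freudenthal--de Vries strange formula $\la\rho,\rho\ra = n/24$ together with $R = n/4$, as already noted after \eqref{spectral-trace-2}; in the path integral this $e^{\frac12\be\la\rho,\rho\ra}$ emerges from the curvature term in $\del_h V$ contributing a constant to the classical action, or equivalently from the product of the root-space determinants having an overall anomalous normalization. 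I expect the main obstacle to be the careful determination of the fermionic deformation $V$ and the precise $\zeta$-regularization of the fluctuation determinants around the non-constant geodesics $g_\ga$ — in particular, showing that the one-loop Pfaffian-over-determinant ratio collapses to the finite product over positive roots with no extra $\ga$-dependent constant, and that the contributions over root spaces $\al$ and $-\al$ combine to give $\sinh$ rather than something spurious; the identity \eqref{prod-al-ga} guaranteeing these are nonzero under the regularity assumption on $h$ will be used here. The translation-invariance and volume normalizations ($V_G$, the coroot-lattice structure of $\Gamma$, and the zero-mode measure scaling as in Remark \ref{rescaling}) must be tracked consistently so that the final answer has exactly the stated prefactor $V_G/(2\pi\be)^{n/2}$.
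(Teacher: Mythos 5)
Your plan follows the paper's route: apply Proposition~\ref{new-loc} with the $h$-twisted supersymmetry $\delta_h$, a higher-derivative deformation $V$, localize onto $\dot J^h = 0$, reduce to isolated critical points labelled by $\Gamma$, and assemble the one-loop Pfaffian/determinant ratio over the Cartan--Weyl decomposition. The determinant computation over each root space producing $\ex{\alpha,h+\gamma}/2 \big/ \sinh\tfrac12\ex{\alpha,h+\gamma}$, the Cartan directions giving $\beta^{-r/2}$, and the $G$-translation zero mode giving $V_G$ are all correct.

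Two points would derail the explicit computation if taken literally. First, the proposed deformation $V = -\tfrac12\int(\psi,\nabla^-_\tau J^h + \ldots)\,d\tau$ contains $\psi$ undifferentiated and hence \emph{fails} condition~(B), as you yourself note a sentence later; the correct $V$ is $-\tfrac12\int_0^\beta(\dot J^h,\dot\psi)\,d\tau$, with $\dot\psi$ (not $\psi$) paired against $\dot J^h$, and since $\nabla^-$ is flat in the $\psi^a$ frame there are no ``connection terms.'' Second, the closing assertion that $e^{\frac12\beta\la\rho,\rho\ra}$ ``emerges from the curvature term in $\delta_h V$'' or from ``an anomalous normalization'' of the root-space determinants is wrong: $\delta_h V$ has no curvature term (flatness again), and the zeta-regularized Pfaffian/determinant ratio collapses \emph{exactly} to $\beta^{-n/2}\prod_{\alpha\in R_+}\tfrac12\ex{\alpha,h+\gamma}/\sinh\tfrac12\ex{\alpha,h+\gamma}$ with no anomalous prefactor. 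The factors $e^{-\beta R/12}$ and $e^{\frac12\beta\la\rho,\rho\ra}$ live entirely on the operator side of \eqref{eq:eskinstr} and \eqref{co-root-formula}; Theorem~\ref{E-TF} itself is a direct equality between the path integral and the sum over $\Gamma$ and does not involve them. (A smaller slip: in $\Omega G$ the critical loops are $g_\gamma(\tau)=e^{\tau\gamma/\beta}$, with $J^h_\gamma=(h+\gamma)/\beta$; the paths $e^{(h+\gamma)\tau/\beta}$ are the geodesics in $G$ from $1$ to $e^h$ but are not based loops.)
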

\begin{proof}

Put
\begin{align}
V & =-\frac{1}{2}\int_0^\be(\dot J^{h},\dot \psi)\,d\tau,\label{V-1}\\
\intertext{so}
\del_{h} V&={1\ov2}\int_0^\be \left( (\dot J^{h},\dot J^{h})+(\dot \psi,(\pa_\tau+\text{ad}_{J^{h}})\dot \psi)\right)\,d\tau.\label{del-V}
\end{align}
It is straightforward to check that $\del_{h}^2 V=0$ and $V, \del_{h}V\perp \chi^a $, $a=1,\dots, n$. Hence $V$ is an invariant deformation to the path integral \eqref{eq:strpi}, and  according to Proposition \ref{new-loc},
\beq \label{eq:gploc}
\begin{gathered}
\int \limits_{\Pi TL G} \chi_1 \dots \chi_r   \prod_{\al \in R_+}  i \chi_{\al} \chi_{-\al}  e^{-S_E^{h}}\cD g\cD\psi \\=\int \limits_{\Pi TL G} \chi_1 \dots \chi_r   \prod_{\al \in R_+}i \chi_{\al} \chi_{-\al}  e^{-S_E^{h}-s\delta_{h}V} \cD g\cD\psi.
\end{gathered}
\eeq
Moreover, purely bosonic part of $\delta_{h}V$ is positive semi-definite with zeros at $\dot J^h=0$, so in the limit $s\rightarrow \infty$, the path integral \eqref{eq:gploc} localizes on the locus $\dot J^h=0$. 
To facilitate the computation, we use the left $G$-invariance and integrate over fermionic zero modes with the measure \eqref{eq:fmG} to obtain
\beq \label{eq:spec}
\begin{gathered}
\int \limits_{\Pi TL G} \chi_1 \dots \chi_r  \prod_{\al \in R_+}  i \chi_{\al} \chi_{-\al}  e^{-S_E^{h}-s\delta_{h}V} \cD g\cD\psi \\=V_{G} \bigg(\prod_{\al\in R_+} i\bigg)\int \limits_{\Pi T\Omega G}e^{-S_E^{h}-s\delta_{h}V} \cD' g\cD'\psi,
\end{gathered}
\eeq
where $\Omega G$ is the space of based loops on $G$, $g(0)=g(\be)=1$. 
Since $h\in\frak{t}$ is regular,  solutions of the equation $\dot J^h=0$ in $\Omega G$ are isolated geodesics parameterized by the lattice $\Gamma$,
\beq
g_\ga(\tau)=e^{\frac{1}{\be}\tau\ga}\quad\text{and}\quad J^{h}_{\ga}=\frac{h+\ga}{\be},\quad \ga\in\Gamma.
\eeq

Indeed, we have
$$J^h =g^{-1}\left(\dot{g}+\frac{1}{\be}hg\right)=\tilde g^{-1} \dot{\tilde g},$$ 
where $\tilde g(\tau)=e^{h\tau/\be}g(\tau)$. Hence equation $\dot J^h=0$ implies $J^{h}(\tau)=c/\be$ with some $c\in\frak{g}$, so $\tilde{g}(\tau)=\tilde{g}(0)e^{c\tau/\be}$ and $g(\tau)=e^{-h\tau/\be}g(0)e^{c\tau/\be}$. The condition $g\in \Omega G$ gives
$e^{h}=e^{c}$, and $h\in\frak{t}$ being regular implies that $c\in\frak{t}$ (see Lemma \ref{lemma:exp} in Appendix \ref{appsec:g.1}). Thus $e^{c-h}=1$ and $c-h=g\tilde\gamma g^{-1}$ for some $\tilde\gamma \in \Gamma$ and $g\in G$, and since $c-h\in\frak{t}$, we have $g\in W$, the Weyl group of $G$,
and $\gamma=g\tilde\gamma g^{-1}\in\Gamma$.

Similarly, since $h$ is nonsingular, it is easly to see that solution of the equation
$$\pa_{\tau}(\pa_\tau+\text{ad}_{J^{h}})\dot \psi=0$$
in $\Pi\Omega\frak{g}$ gives $\psi_{cl}=0$.

Now as in Section \ref{sec:S1}, using \eqref{del-V} we readily obtain
\beq  \label{eq:inter}
\begin{gathered}
\lim_{s\to\infty}\int \limits_{\Pi T\Omega G}e^{-S_E^{h}-s\delta_{h}V} \cD'g\cD'\psi \\=\frac{1}{(2\pi)^{\frac{n}{2}}}
\int \limits_{\Pi T\Omega G} e^{-S_{E}^{h}[g,\psi]}\del(\dot J^{h})\del (\psi)\Pf{-\pa^{3}_\tau-\text{ad}_{J^{h}}\pa^{2}_{\tau}} \cD' g\cD'\psi  \\
=\frac{1}{(2\pi)^{\frac{n}{2}}}\int \limits_{\Omega G} \sum_{\ga\in\Gamma}e^{-S_{E}^{h}[g,0]}\frac{\del(g g^{-1}_{\ga})}{|\det (\text{d}\dot{J}^{h})|}\Pf{-\pa^{3}_\tau-\text{ad}_{J^{h}}\pa^{2}_{\tau}}\cD' g\\
 = \frac{1}{(2\pi)^{\frac{n}{2}}}\sum_{\ga\in\Gamma}\frac{\Pf{-\pa^{3}_\tau-\text{ad}_{(h+\ga)/\be}\pa^{2}_{\tau}}}{|\det (\text{d}\dot{J}^{h})|} e^{-{(h+\ga,h+\ga)\ov 2\be}},
\end{gathered}
\eeq
where the linear operator $\text{d}\dot{J}^{h}:\Omega\frak{g}\to \Omega\frak{g}  $ is the differential of the functional $\dot{J}^{h}(\tau)$ on $\Omega G$ at the critical point $J^{h}_{\ga}$.

By definition of $J^{h}$
we have
\beq\label{dJ}
\text{d}J^{h}(\tau)=\pa_\tau+\text{ad}_{J^h(\tau)},
\eeq
so
\beq\label{d-dotJ}
\text{d}\dot{J}^{h}(\tau)=\pa^{2}_\tau+\pa_{\tau}\text{ad}_{J^h(\tau)}.
\eeq
Indeed, put $\del g=gX$, $X\in\frak{g}$, so
$$\del\dot{g}=\dot{g}X+g\dot{X}$$
and for $\del J^{h}=\text{d}\dot{J}^{h}(\tau)(X)$ we have
$$\del J^{h}=-g^{-1}\del g g^{-1}\dot{g}+g^{-1}\del\dot{g}+\frac{1}{\be}(g^{-1}h\delta g-g^{-1}\delta g g^{-1}hg)=\dot{X}+[J^{h}, X],$$
which proves \eqref{dJ}.
Thus for each critical point $J^{h}(\tau)=(h+\ga)/\be$ we obtain
\beq
 \text{d}\dot J^{h}(\tau)=\pa^{2}_\tau+\text{ad}_{(h+\ga)/\be}\pa_{\tau}.
\eeq

To compute the functional determinants in \eqref{eq:inter}, we use the Cartan-Weyl basis \eqref{g-real} for $\frak{g}$ (cf. \cite{Picken:1988ev}) and 
as in Section \ref{sec:S1} consider the expansions
\begin{align}
X(\tau) & =\sum_{n\neq 0}\left(\sum_{j=1}^{r}c_{j,n}iH_{j}+\sum_{\al\in R_+}c^{(1)}_{\al,n}X_{\al}+\sum_{\al\in R_+}c^{(2)}_{\al,n}Y_{\al}\right)u_{n}(\tau)\in L\frak{g}
/\frak{g}\simeq \Omega\frak{g}, \label{expansion-real-1}\\
\psi(\tau) &=\sum_{n\neq 0}\left(\sum_{j=1}^{r}\psi_{j,n}iH_{j}+\sum_{\al\in R_+}\psi^{(1)}_{\al,n}X_{\al}+\sum_{\al\in R_+}\psi^{(2)}_{\al,n}Y_{\al}\right)u_{n}(\tau)\in\Pi\Omega\frak{g}, \label{expansion-real-2}
\end{align}
where we put $X_{\al}=E_{\al}-E_{-\al}$ and $Y_{\al}=i(E_{\al}+E_{-\al})$.  
It follows from the commutation relations
$$[H,X_{\al}]=-i\la H,\al\ra Y_{\al}\quad\text{and}\quad[H,Y_{\al}]=i\la H,\al\ra X_{\al},\quad \text{where}\quad H\in i\frak{t},$$
(see Appendix \ref{appsec:g.1})  that $iH_{j}u_{n}(\tau)$ and $X_{\al}u_{n}(\tau)+Y_{\al}u_{-n}(\tau)$ are the eigenfunctions of the operator  $-\pa^{2}_\tau-\text{ad}_{(h+\ga)/\be}\pa_{\tau}$  with the eigenvalues
$\omega^{2}_{n}$ and $\omega^{2}_{n} +i\dfrac{\ex{h+\ga,\al}}{\be}\omega_{n}$.

Similarly, in the two-dimensional subspaces  spanned by $iH_{j}u_{n}(\tau), iH_{j}u_{-n}(\tau)$ and by $X_{\al}u_{n}(\tau)+Y_{\al}u_{-n}(\tau), X_{\al}u_{-n}(\tau)-Y_{\al}u_{n}(\tau)$, where $n\in \mathbb Z$ and $\al\in R_+$, 
the skew-symmetric operator $-\pa^{3}_\tau-\text{ad}_{(h+\ga)/\be}\pa^{2}_{\tau}$ acts by the following $2\times 2$ matrices
$$\begin{pmatrix} 0 & \omega_{n}^{3}\\-\omega_{n}^{3} & 0\end{pmatrix}\quad \text{and}\quad  \begin{pmatrix} 0 & \omega_{n}^{3}+i\omega^{2}_{n}\dfrac{\ex{h+\ga,\al}}{\be}\\-\omega_{n}^{3}-i\omega^{2}_{n}\dfrac{\ex{h+\ga,\al}}{\be} & 0\end{pmatrix}.$$
Finally, for the ratio of the functional determinants in \eqref{eq:inter} we have, using the standard zeta function regularization,
\begin{gather*}
\frac{\Pf{-\pa^{3}_\tau-\text{ad}_{(h+\ga)/\be}\pa^{2}_{\tau}}}{\det(-\pa^{2}_\tau-\text{ad}_{(h+\ga)/\be}\pa_{\tau})} =\\ \left(\prod_{n=1}^{\infty} \omega^{3}_n\!\right)^{\! r}\!\!\!\prod_{\al\in R_+}\prod_{n\neq 0}^{\infty} \left(\!\omega^{3}_n+i\frac{\ex{h+\ga,\al}}{\be}\omega^{2}_{n}\right) \Big/
\left(\prod_{n=1}^{\infty} \omega^{4}_n\!\right)^{\! r}\!\!\!\prod_{\al\in R_+}\prod_{n\neq 0}^{\infty}\left(\!\omega^{2}_n+i\frac{\ex{h+\ga,\al}}{\be}\omega_{n}\!\right)^{\! 2}\\
=\left(\prod_{n=1}^{\infty} \omega_n\right)^{\! -r}\prod_{\al\in R_+}\,\prod_{n\neq 0} \left(\omega_n+i\frac{\ex{h+\ga,\al}}{\be}\right)^{\!-1}\\
=\be^{-\frac{n}{2}}\prod_{\al \in R_+}\frac{\frac{1}{2}\la\al,h+\ga\ra}{ i \sinh\frac{1}{2}\la\al,h+\ga\ra}, 
\end{gather*}
where in the last line we used \eqref{eq:-1}.

Finally, together with the factor of $i$'s in r.h.s. of \eqref{eq:spec} from the zero modes integrals, we provide the path integral derivation Eskin trace formula which is Theorem \ref{E-TF}.
\end{proof}
\begin{remark} Using \eqref{eq:eskinstr}, \eqref{eq:strpi} and \eqref{spectral-trace-2}, the statement of the theorem can be rewritten as the equality ``Spectral Trace = Matrix Trace'' for the operator
$e^{-\frac{1}{2}\be\Delta}$,
\begin{gather*} 
e^{\frac{1}{2}\beta\la\rho,\rho\ra}\sum_{\lambda\in P_{+}}\chi_{\lambda}(h)\prod_{\al\in R_{+}}\frac{\la\lambda+\rho,\al\ra}{\la\rho,\al\ra}e^{-\frac{1}{2}\be\la\lambda+\rho,\lambda+\rho\ra}\\=V_{G}\frac{e^{\frac{1}{2}\beta\la\rho,\rho\ra}}{(2\pi \be)^{n/2}}\sum_{\ga \in\Gamma}\prod_{\al\in R_+}{\frac{1}{2}\ex{\al,h+\gamma}\ov \sinh\frac{1}{2}\ex{\al,h+\gamma }}e^{-{ \frac{1}{2\be}(h+\gamma,h+ \gamma)}},
\end{gather*}
which generalizes the Jacobi inversion formula.
After trivial rewriting, this formula is precisely the Eskin trace formula! It should also be noted that when $G$ is simply connected the sign in formula 
\eqref{prod-al-ga} is plus for all $\ga\in\Gamma$,
and Eskin formula is proved in Theorem 4.3 in \cite{bismut2008hypoelliptic}. The latter is obtained as a special case of Theorem 4.3.4 in \cite{Frenkel1984}, when one of Cartan elements goes to $0$.
\end{remark}

\section{Singular Trace Formula on $G$}

Since the spectral side in Eskin formula is a smooth function of $h\in\frak{t}$, it is natural to relax the regularity condition on $h$. In particular, for $h=0$ the spectral side is just  $\Tr e^{-\frac{1}{2}\beta\Delta}$. However, for singular $h$ the sum over $\Gamma$ has vanishing denominators, and it was observed in  \cite{MR0206535} that one needs to sum over the orbits of the Weyl group of $G$, and only then take the limit $h\to 0$. Physically, such singular behavior indicates that geodesics are no-longer isolated, so the localization formalism should be modified. In this section, we consider the most singular case $h=0$; generalization to other singular cases is  straightforward. 

Let's start from the localization formula established in the previous section in the case of $h=0$. 
\beq
{1\ov V_G}e^{-\frac{1}{12}\be R} \Tr e^{-\frac{1}{2}\be \Delta}= \int \limits_{\Pi T \Omega G} e^{-S_E -s\del V} \cD g\cD\psi \equiv I(s)
\eeq

We write the path integral as
$$I(s)=\int_{\Pi T\Omega G}F(J,\psi)\exp\left\{- {s \ov 2}\int_0^\be\left((\dot J,\dot J)+ (\dot \psi ,(\pa_\tau+\text{ad}_{J})\dot \psi)\right)d\tau\right\} \cD g\cD\psi  $$
where
$$F(J,\psi)=\exp\left\{- {1\ov 2}\int_0^\be\left((J,J)+(\psi,\dot \psi)\right)d\tau\right\},$$

We first analyze the critical points under the localization limit $s\rightarrow \infty$. 
\subsection{Critical points} We have $\dot{J}=0$, where $J=g^{-1}\dot{g}$, so equation $J=\mathrm{const}$ and $g(0)=g(\be)=1$ give
$g(\tau)=e^{u\tau}$, where $e^{u\beta}=1$. Therefore $u=\frac{1}{\be}v\ga v^{-1}$, where $\ga\in\Gamma$, the characteristic lattice, and $v\in G/C_{\ga}$, where
$$C_{\ga}=\{g\in G: g\ga g^{-1}=\ga\}$$
is the centralizer of $\ga$ in $G$. 
Thus $u\in\mathcal{O}_{\ga}$, the orbit of $\ga/\be$ in $\frak{g}$ under the adjoint action of $G$. Clearly, $\mathcal{O}_{\ga}=\mathcal{O}_{\ga'}$ if $\ga'=w\ga w^{-1}$, where
$w\in W$, the Weyl group of $G$. 
Thus connected components for bosonic critical points are the sets $\mathcal{M}_{\ga}$ in $\Omega G$,
\beq\label{J-critic}
g_{cl}(\tau)=v e^{\frac{\tau}{\be}\ga}v^{-1},\quad J_{cl}=\frac{1}{\be}v\ga v^{-1},
\eeq
parameterized by $v\in G/C_{\ga}$, where $\ga\in\Gamma/W$. In other words, $J_{cl}\in\mathcal{O}_{\ga}$.

As in Section \ref{E on G}, for $g\in \Omega G$ we put $X=g^{-1}\del g$ and rewrite the expansion \eqref{expansion-real-1} in the form\footnote{\label{ft:basis}Note that just for convenience, we are using a slightly different basis $\{H_j, E_\al\}$ compared to the $\{iH_j,E_\al\}$ used in Section \ref{sec:G}.}
\beq\label{X-formula}
X(\tau)= \sum_{n,j}z_{j,n}H_j e^{i\omega_n \tau}+\sum_{n,\al\in R}z_{\al,n}E_\al e^{i\omega_n\tau}\in \Omega\frak{g},
\eeq
where  $\overline{z}_{j,n}=-z_{j,-n}$, $\overline{z}_{\al,n}=-z_{-\al,-n}$ and $H_1,\dots H_{r}$, $E_{\al}$, $\al\in R$, is the orthonormal Cartan-Weyl basis (see Appendix \ref{appsec:g.1}). Condition that
$X$ is a tangent vector to the image of $\Omega G$ under the map $g\to J=g^{-1}\dot{g}$ is $X(0)=0$, so the path integral measure $\cD' g$ becomes proportional to 
\begin{gather*}
\prod_{j}\delta\left(\sum_{n \in\ZZ} z_{j,n}\right)\prod_{\al\in R}\delta\left(\sum_{n\in\ZZ}z_{\al,n}\right) \\
\times \prod_{j} dz_{j,0}\prod_{n=1}^{\infty}dz_{j,n}dz_{j,-n}\prod_{\al\in R}dz_{\al,0}\prod_{n=1}^{\infty}dz_{\al,n}dz_{\al,-n}.
\end{gather*}

For notational convenience, we put $\gamma=2\pi i \nu$.\footnote{When $G$ is simple and simply connected we have  $\nu\in Q^\vee$.}
Writing $v=e^{\sum u^{\al}E_{\al}}$ with $\overline{u}^\al=-u^{-\al} $, where summation goes over $\al$ such that $\ex{\al,\nu}\neq 0$ and using 
\beq\label{ga-al}
[\ga, E_{\al}]=2\pi i \ex{\al,\nu}E_{\al},
\eeq
we have for the corresponding tangent vector along $\mathcal{M}_{\ga}$
\beq \label{eq:tan}
g_{\ga}^{-1}\delta g_{\ga} & = u^\al (e^{-\frac{\tau}{\be}\ga}E_\al e^{\frac{\tau}{\be}\ga}-E_\al)\\
&=u^\al (e^{- i\omega_{\ex{\al,\nu}}\tau}-1)E_\al.
\eeq

The fermion critical points satisfy the equation
\beq\label{fermi-critic}
\pa_{\tau}(\pa_\tau+\text{ad}_{J})\dot \psi=0,
\eeq
where $J=J_{cl}$. We have
$$(\pa_\tau+\text{ad}_{J})\dot \psi=c,$$
and since $\psi\in\Pi\Omega\frak{g}$, using Fourier series we have $c=0$, so
\beq\label{fermi-critic-1}
(\pa_\tau+\text{ad}_{J})\dot \psi=0.
\eeq
It is sufficient to solve this equation for $v=1$. It is convenient to use the expansion for $\psi\in  \Pi\Omega \mathfrak g$ as\footnote{Similarly, we are using a different basis for a convenience, see footnote \ref{ft:basis}.}
\beq\label{psi-omega}
\psi=  \sum_{n,j }\psi_{j,n}H_j e^{i\omega_n \tau}+\sum_{n,\al}\psi_{\al,n}E_\al e^{i\omega_n\tau}\in\Pi\Omega\frak{g},
\eeq
where $\psi_{j,0}=\psi_{\al,0}=0$ and $\overline{\psi}_{j,n}=-\psi_{j,-n}$, $\overline{\psi}_{\al,n}=-\psi_{-\al,-n}$.
Then equation \eqref{fermi-critic-1} gives
$$ \sum_{n,j} i\omega_{n}\psi_{j,n}H_j e^{i\omega_n \tau}+\sum_{n,\al}\left(i\omega_{n}+2\pi i\frac{\langle\al,\nu\rangle}{\be}\right)\psi_{\al,n}E_\al e^{i\omega_n\tau}=0,$$
so that $\psi_{n,j}=0$ and $\psi_{\al,n}=0$ except
$\psi_{\al, -\ex{\al,\nu}}$ for $\ex{\al,\nu}\neq 0$, which are arbitrary. Thus the general solution of  \eqref{fermi-critic-1} is
\beq\label{psi-sol}
\chi=\sum_{\ex{\al,\nu}\neq 0}\psi_{\al,-\ex{\al,\nu}}E_{\al}e^{i\omega_{-\ex{\al,\nu}}\tau}
\eeq
and is parameterized by $\Pi (\frak{g}/c_{\ga})$, where $c_{\ga}=\{x\in\frak{g}: [x,\ga]=0\}$ the centralizer of $\ga$ in $\frak{g}$. 

\subsection{Quadratic form}  It is easy to show that the map $J: \Omega G\to L\frak{g}$ is injective and its image is a coadjoint orbit $\widehat{\mathcal{O}}_{1}$ of the element
 $0\in L\frak{g}$ (see Appendix \ref{A3}).
Correspondingly, a connected component of the critical locus of $J_{cl}$ is a coadjoint orbit $\mathcal{O}_{\ga}$ of $G$ in $\frak{g}$, and we have $\mathcal{O}_{\ga}\subset\widehat{\mathcal{O}}_{1}$. 

According to formula \eqref{dJ}, the differential $\text{d}J$ of the map $J: \Omega G\to L\frak{g}$ is the differential operator $D_{J}=\pa_{\tau}+\text{ad}_{J}$.
Let the vector $X\in \Omega \frak{g}$, given by \eqref{X-formula}, be the tangent vector to $\Omega G$ at $g_{cl}$. Then for $g(\tau)=g_{cl}(\tau)e^{ X(\tau)}$ we obtain
\begin{equation}\label{J-expansion}
J=g^{-1}\dot{g}=J_{cl}+ DX+\frac{1}{2}[DX,X]+\text{higher order terms},
\end{equation}
where we denote $D=D_{J_{cl}}$.

It follows from \eqref{X-formula} that\footnote{Here we are not assuming that constant terms of $X(\tau)$ vanish but rather imposing the constraints $X(0)=X(\tau)=0$, which follows from $g\in\Omega G$.}
$$DX=\sum_{n,j}i\omega_{n}z_{j,n}H_{j}e^{i\omega_{n}\tau}+\sum_{n,\al}i\omega_{n+\ex{\al,\nu}}z_{\al,n}E_{\al}e^{i\omega_{n} \tau}.$$

Using the $\mathrm{Ad}_G$ invariance of the localizing action, we can reduce bosonic critical manifold $M_\gamma$ to a representative $\gamma$ and according to \eqref{eq:tan} we decompose 
$$X(\tau)=X^{\perp}(\tau)+ u^\al (e^{- i\omega_{\ex{\al,\nu}}\tau}-1)E_\al,$$
where  the vector $X^{\perp}(\tau)$ is orthogonal to $M_{\ga}$ at $\ga$, and for its modes in addition to constraints above, we have $z_{\al,0}=z_{\al,-\ex{\al,\nu}}$ for $\ex{\al,\nu}\neq 0$. Similarly, we decompose
$$\psi(\tau)=\eta(\tau)+\chi(\tau),$$ 
 where $\chi$ is given by \eqref{psi-sol} and $\eta$ is orthogonal to $\chi$, so it does have fermion modes $\psi_{\al,0},\psi_{\al,-\ex{\al,\nu}}$ and $\psi_{j,0}$.

In the limit $s\rightarrow \infty$ the path integral $I(s)$ effectively reduces onto a small tubular neighborhood of $\mathcal{M}_{\ga}$ in $\Omega G$. Indeed, rescale $X^{\perp}$ and $\eta$ such that $X^{\perp}\rightarrow {X^{\perp}\ov \sqrt{s}}$ and $\eta\rightarrow {\eta\ov \sqrt{s}}$, which according to Remark \ref{rescaling} does not change the integration measure. Next, consider the terms in the action left over in the $s\rightarrow \infty$ limit\footnote{We could use this procedure for the derivation of the Eskin trace formula in Section \ref{E on G}. Since the critical points in that case are isolated, we obtain the same path integral \eqref{eq:inter}.} and use the delta-function constraints $\del( \sum_{n\in \mathbb Z }z_{j,n})$, $\del (\sum_{n\in \mathbb Z} z_{\al, n})$.  The
integral over modes describing the orbit $\mathcal{O}_{\ga}$ naturally  gives the  volume $\mathrm{vol}(\mathcal O_\gamma)$ with respect to the Cartan-Killing volume form on $G$, and we obtain
\beq\label{main-integral}
\lim_{s\rightarrow \infty}I(s)=\sum_{\gamma\in \Gamma/W} \text{vol}(\mathcal O_\gamma ) e^{-S_{\gamma}}\bm\int e^{-S^{loc}_{\gamma}} \mathscr DY \mathscr D\eta  \, d \chi,
\eeq 
where $S^{loc}_{\ga}$ is the classical action, 
\begin{gather*} 
S^{loc}_{\gamma} =\nonumber
\\ \int\left( {1\ov 2}\ex{D\dot Y,D\dot Y}+{1\ov 2}\ex{\chi,\dot \chi}+{1\ov 2}\ex{\dot \eta,D\dot \eta}+\frac{1}{2}\ex{\dot \eta,[DY,\dot \chi]}+{1\ov 4}\ex{\dot \chi, [[DY,Y],\dot \chi]}\right)d\tau.
\end{gather*}
The tangent vector $Y$ does not contain modes $z_{\al,0},z_{\al,-\ex{\al,\nu}}$ and $z_{j,0}$, and the integration over $\chi$ is finite-dimensional. 

It is tempting to make a shift $\dot\eta\mapsto\dot\eta+[Y,\dot\chi]$ in the path integral \eqref{main-integral}. We write $[Y,\dot \chi]$ explicitly as
$$ 
\left[\sum_{j,n\neq 0 }  z_{j,n}H_j e^{i\omega_n \tau} +\sum_{\al, n\neq 0,-\ex{\al,\nu}} z_{\al,n}E_\al e^{i\omega_n\tau}, -i\sum_{\be}\omega_{\ex{\be,\nu}}\psi_{\be,-\ex{\be,\nu}}E_\be e^{-i\omega_{\ex{\be,\nu}}\tau} \right].
$$ 

This commutator does not contain modes $H_j$ and $E_\al e^{-i\omega_{\ex{\al,\nu}}}$. First, there is no constant term $H_j$, since it could only comes from $[E_\al,E_{-\al}]e^{i\omega_{n+\ex{\al,\nu}}\tau}$, but $n+\ex{\al,\nu}\neq 0 $. Similarly, the term $E_\al e^{-i\omega_{\ex{\al,\nu}}}$ cannot appear: it can come only from $[H_j,E_{\al}]e^{i\omega_{n-\ex{\al,\nu}}\tau}$ or from $[E_{\al},E_{\be}]e^{i\omega_{n-\ex{\be,\nu}}\tau}$, but $n\neq 0$. 

However, this commutator $[Y,\dot\chi]$ contains modes $\psi_{\al,-\ex{\al,\nu}}$ with $\ex{\al,\nu}\neq 0$. Namely, the appear in the constant term
\begin{gather*}\zeta=\frac{1}{\be}\int_{0}^{\be}[Y,\dot\chi]d\tau \\=\sum_{j}\sum_{\al\in R} \omega_{\ex{\al,\nu}}\al_j z_{j,\ex{\al,\nu}} \psi_{\al,-\ex{\al,\nu}} E_\al+\sum_{\al,\be\in R,\al+\be\neq 0} \omega_{\ex{\be,\nu}} N(\al,\be) z_{\al, \ex{\be,\nu}}  \psi_{\be,-\ex{\be,\nu}} E_{\al+\be}.
\end{gather*}
Thus the correct change of variables is $\dot\eta\mapsto\dot\phi=\dot\eta+[Y,\dot\chi]-\zeta$. As a result, we get the action (renaming $\phi$ to $\eta$ again)
\beq \label{action-Y-1}
S^{loc}_{\ga} ={1\ov 2}\int\Big( & \ex{D\dot Y,D\dot Y}+\ex{\chi,\dot \chi}+\ex{\dot \eta,D\dot \eta} 
\\&+\ex{\zeta,\mathrm{ad}_{J_{cl}}\zeta}-\ex{[Y,\dot\chi],[DY,\dot\chi]}+\frac{1}{2}\ex{\dot \chi, [[DY,Y],\dot \chi]}\Big)d\tau. 
\eeq

We have
 \begin{align*}
 [[DY,Y]\dot\chi]= & -[[Y,\dot\chi],DY]-[[\dot\chi,DY],Y]\\
 & = [DY, [Y,\dot\chi]] + [Y, [\dot\chi,DY]],
 \end{align*}
so
\begin{align*}
\ex{\dot\chi,[[DY,Y],\dot\chi]} & =\ex{\dot\chi, [DY, [Y,\dot\chi]]}+\ex{\dot\chi, [Y, [\dot\chi,DY]]}\\
& =-\ex{[DY,\dot\chi], [Y,\dot\chi]}-\ex{[Y,\dot\chi], [\dot\chi,DY]}\\
&=2\ex{[Y,\dot\chi],[DY,\dot\chi]},
\end{align*}
and we finally obtain
\begin{align}\label{S-final}
S^{loc}_{\ga} &={1\ov 2}\int\left( \ex{D\dot Y,D\dot Y}+\ex{\chi,\dot \chi}+\ex{\dot \eta,D\dot \eta} + \ex{\zeta,\mathrm{ad}_{J}\zeta}\right)d\tau.
\end{align}

\begin{remark} The main reason that the shift $\dot \eta\rightarrow \dot \eta + [Y,\dot \chi]$ could be dangerous is because $[Y,\dot \chi]$ may contain a constant Fourier mode which cannot be integrated in $L\frak{g}$.
Namely, consider the following elementary example 
\beq
I=\bm\int_{\Pi \Omega \mathfrak g} e^{ -\int_0^\be   \ex{\dot \eta, [A,\dot \eta] }d\tau }\cD\eta=\text{Pf}(\pa^{2}_{\tau}\text{ad}_A),
\eeq
where $A\in \mathfrak g$ is a constant, and consider the shift  $\dot \eta\rightarrow \dot\vp=\dot \eta+\varepsilon$ where $\varepsilon \in \Pi \mathfrak g$ is a constant. 
Obviously, after integration $\vp$ is no longer in $L\frak{g}$. Nevertheless, if we naively assume that the path integral measure does not change, $\cD\eta=\cD\vp$ ,and integration is still over $\Pi\Omega\frak{g}$, then we obtain
\beq
\int_0^\be  \ex{\dot \eta, [A,\dot \eta] }d\tau = & \int_0^\be  \ex{\dot \vp, [A,\dot \vp] }d\tau-2\int_0^\be  \ex{\dot \vp [A,\varepsilon] }d\tau+\int_0^\be  \ex{\varepsilon, [A,\varepsilon ] }d\tau
\\&=\int_0^\be  \ex{\dot \vp, [A,\dot \vp] }d\tau+\beta\ex{\varepsilon, [A,\varepsilon ] },
\eeq
where the middle term vanishes since it is a total derivative. Now since $\ex{\varepsilon, [A,\varepsilon ] }$ is a non-zero constant, we get that  under such change of variables the path integral gets multiplied by
$e^{-\beta\ex{\varepsilon, [A,\varepsilon ] }}$, which is a contradiction.
\end{remark}

To summarize, the integral over $\cD Y$ is Gaussian and everything reduces to evaluating the regularized determinant of  the following non-local linear operator 
\beq\label{E-Value}
D^{2}\ddot{Y}(\tau)+\frac{1}{\be}\left[\dot\chi(\tau),\int_{0}^{\be}[J,[Y(\xi),\dot\chi(\xi)]]d\xi\right]=\lambda Y(\tau),
\eeq
which will contain a finite product of eigenvalues that depend on the modes of $\chi(\tau)$. Note that because of the conditions on $Y(\tau)$, this operator should be considered acting on the Hilbert space $L^{2}_{\ga}([0,\be],\frak{g})$ of $\frak{g}$-valued functions, whose Fourier 
series do not contain coefficients  $z_{\al,0},z_{\al,-\ex{\al,\nu}}$ and $z_{j,0}$.

The second term in \eqref{E-Value} can be written in a more suggestive form. Namely, let $P$ be a projection on a constant term operator,
$$PY=\frac{1}{\be}\int_{0}^{\be}Y(\xi)d\xi.$$
Using equation $D\chi=0$, we have
$$[J,[Y,\dot\chi]]=[Y,[J,\dot\chi]]-[\dot\chi,[J,Y]]=-[Y,\ddot\chi]-[\dot\chi,[J,Y]]$$
and integrating over $\xi$ we get
$$\int_{0}^{\be}[J,[Y(\xi),\dot\chi(\xi)]]d\xi=-\int_{0}^{\be}[\dot\chi(\xi), DY(\xi)]d\xi,$$
so we can write our operator as
\beq\label{Operator-L}
L=D^{2}\pa^{2}_{\tau}-\mathrm{ad}_{\dot\chi}P\mathrm{ad}_{\dot\chi}D=(D\pa^{2}_{\tau}-\mathrm{ad}_{\dot\chi}P\mathrm{ad}_{\dot\chi})D.
\eeq

This means that one needs to use another basis in the $Y$-space, adapted to the commutativity of $D$ and $\mathrm{ad}_{\dot\chi}$.

Therefore, after integrating over $Y$ and $\eta$, we finally obtain the following result.
\begin{theorem}  \label{Singular-TF} The following trace formula holds. 
\begin{gather*}
e^{-\frac{1}{12}\be R} \Tr e^{-\frac{1}{2}\be \Delta}=\\
 V_G \sum_{\gamma\in\Gamma/W} {\text{vol}(\mathcal O_\gamma)\ov (2\pi)^{(d_G+d_{\mathcal O_\gamma } ) / 2}} e^{-S_\gamma} \int {\mathrm{Pf}(D\pa^{2}_\tau) \ov 
 \sqrt{\det(D^{2}\pa^{2}_{\tau}-\mathrm{ad}_{\dot\chi}P\mathrm{ad}_{\dot\chi} D)} } e^{-{1\ov 2}\int \ex{\chi,\dot \chi}d\tau }d \chi.
\end{gather*}
where the Pfaffian and determinant are taken with respect to the real Hilbert space $L^{2}_{\ga}([0,\be],\frak{g})$ and integration over $\chi $ is over the finite-dimensional space $\Pi \frak{g}/c_{\ga}$ defined in \eqref{psi-sol}.
\end{theorem}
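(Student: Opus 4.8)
The plan is to specialize the localization identity of Theorem~\ref{E-TF} to the most singular value $h=0$, where the regularity hypothesis fails and the critical manifolds cease to be isolated, and then carry out the $s\to\infty$ limit directly. By Proposition~\ref{new-loc} applied to the deformation $V=-\tfrac12\int_0^\be(\dot J,\dot\psi)\,d\tau$, the path integral $I(s)$ is independent of $s$; by \eqref{Tr=Str} and \eqref{eq:hs} with the chemical potential set to zero it equals $V_G^{-1}e^{-\frac{1}{12}\be R}\Tr e^{-\frac12\be\Delta}$. So it suffices to compute $\lim_{s\to\infty}I(s)$ and recognize it as the sum over $\ga\in\Gamma/W$ of the right-hand side divided by $V_G$.

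First I would use the critical-point analysis already carried out above: the bosonic localizing action $\tfrac{s}{2}\int(\dot J,\dot J)+\cdots$ has critical locus consisting of the connected manifolds $\mathcal M_\ga\cong G/C_\ga$ of loops $g_{cl}(\tau)=ve^{\tau\ga/\be}v^{-1}$ with $J_{cl}\in\mathcal O_\ga\subset\frak g$, one for each $\ga\in\Gamma/W$, and the fermionic critical locus is the finite-dimensional space of modes $\chi$ in \eqref{psi-sol}, parameterized by $\Pi(\frak g/c_\ga)$ of dimension $d_{\mathcal O_\ga}$. Next, invoking $\mathrm{Ad}_G$-invariance of the localizing action I would reduce to the representative $\ga$, write $g=g_{cl}e^{X}$ with $X\in\Omega\frak g$ subject to the basepoint constraints $X(0)=X(\be)=0$, decompose $X=X^\perp+u^\al(e^{-i\omega_{\ex{\al,\nu}}\tau}-1)E_\al$ into a transverse part and the tangent to $\mathcal M_\ga$ as in \eqref{eq:tan}, and decompose $\psi=\eta+\chi$. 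Rescaling $X^\perp\mapsto X^\perp/\sqrt s$ and $\eta\mapsto\eta/\sqrt s$ --- which by Remark~\ref{rescaling} leaves the measure invariant --- and retaining only the terms of the expanded action that survive $s\to\infty$, the integration over the orbit directions $u^\al$ produces the factor $\mathrm{vol}(\mathcal O_\ga)$ and the classical action $e^{-S_\ga}$, and one is left with the reduced integral \eqref{main-integral} governed by the local action $S^{loc}_\ga$ of \eqref{action-Y-0}.

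The \emph{heart of the argument} is the simplification of $S^{loc}_\ga$. I would perform the change of variables $\dot\eta\mapsto\dot\phi=\dot\eta+[Y,\dot\chi]-\zeta$ with $\zeta=\frac1\be\int_0^\be[Y,\dot\chi]\,d\tau$ subtracted off precisely so that $\dot\phi$ stays in $L\frak g$; the necessity of this subtraction is exactly the subtlety flagged in the Remark following \eqref{S-final}, since shifting by the leftover constant Fourier mode would illegitimately rescale the path integral. The elementary identity $\ex{\dot\chi,[[DY,Y],\dot\chi]}=2\ex{[Y,\dot\chi],[DY,\dot\chi]}$ worked out above then collapses $S^{loc}_\ga$ to \eqref{S-final}, in which the three blocks decouple: the Gaussian $\eta$-integral with skew operator $D\pa^2_\tau$ contributes $\mathrm{Pf}(D\pa^2_\tau)$; the Gaussian $Y$-integral with the nonlocal operator $L=D^2\pa^2_\tau-\mathrm{ad}_{\dot\chi}P\,\mathrm{ad}_{\dot\chi}D$ of \eqref{Operator-L} contributes $\det(L)^{-1/2}$, both taken in the restricted real Hilbert space $L^2_\ga([0,\be],\frak g)$ whose Fourier modes exclude $z_{\al,0}$, $z_{\al,-\ex{\al,\nu}}$, $z_{j,0}$; and the residual integral over $\chi$ with weight $e^{-\frac12\int\ex{\chi,\dot\chi}\,d\tau}$ remains a finite-dimensional Berezin integral over $\Pi(\frak g/c_\ga)$. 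Collecting the $\zeta$-regularized normalizations of these Gaussian integrations produces the overall prefactor $(2\pi)^{-(d_G+d_{\mathcal O_\ga})/2}$, in exact analogy with the $(2\pi)^{\zeta(0)}$ that appeared in Section~\ref{sec:S1}; summing over $\ga\in\Gamma/W$ and multiplying by $V_G$ yields the asserted formula.

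I expect the main obstacle to be twofold. The first difficulty is making the $s\to\infty$ reduction rigorous in the non-isolated case: one must justify that the path integral concentrates on a tubular neighborhood of $\mathcal M_\ga$, that the cubic and higher terms of the rescaled action genuinely drop out, and that the measure on the orbit directions integrates exactly to $\mathrm{vol}(\mathcal O_\ga)$ against the Cartan--Killing volume form. The second is controlling the $\zeta$-regularized determinant of the nonlocal operator $L$: because $\mathrm{ad}_{\dot\chi}$ does not commute with $D$ in general, one cannot diagonalize $L$ by the clean Cartan--Weyl product used in the regular case, but must instead pass to a $\dot\chi$-adapted basis --- which is why the theorem is stated with the determinant left in operator form rather than evaluated.
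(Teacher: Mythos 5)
Your proposal is correct and follows essentially the same route as the paper: you reduce to a representative of each $\gamma\in\Gamma/W$ using $\mathrm{Ad}_G$-invariance, rescale the transverse and fermionic fluctuations by $1/\sqrt{s}$, perform the shift $\dot\eta\mapsto\dot\eta+[Y,\dot\chi]-\zeta$ with the essential subtraction of the constant mode $\zeta$, collapse the action via the identity $\ex{\dot\chi,[[DY,Y],\dot\chi]}=2\ex{[Y,\dot\chi],[DY,\dot\chi]}$, and read off $\mathrm{Pf}(D\partial_\tau^2)$ from the $\eta$-Gaussian and $\det(L)^{-1/2}$ from the $Y$-Gaussian with the nonlocal operator $L=D^2\partial_\tau^2-\mathrm{ad}_{\dot\chi}P\,\mathrm{ad}_{\dot\chi}D$. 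You also correctly identify the two genuine subtleties — the constant-mode issue in the shift and the fact that $\mathrm{ad}_{\dot\chi}$ does not commute with $D$, which forces the determinant to be left in operator form — exactly as the paper flags them.
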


\begin{remark}
As an example, we apply Theorem \ref{Singular-TF} in the case of $G=\mathrm{SU}(2)$. For the spectral side, we have
$$
e^{-\frac{1}{12}\be R} \Tr e^{-\frac{1}{2}\be \Delta}=\sum_{n=0}^{\infty} (n+1)^2 e^{-{\be (n+1)^2\ov 16}} 
$$

Examining the localization side, there are two distinct classes of critical orbits, distinguished by $\gamma=0$ or $\gamma=2\pi i n\sigma_3$ with $n\in\mathbb N$.

For the trivial orbit $\nu=0$, there is no $\chi$ integral and therefore
$$
{\mathrm{Pf}(D\pa^{2}_\tau) \ov 
 \sqrt{\det(D^{2}\pa^{2}_{\tau}-\mathrm{ad}_{\dot\chi}P\mathrm{ad}_{\dot\chi} D)} } ={1\ov \text{Pf}(\pa_\tau)}={1\ov \be^{3/2} }
$$

For the non-trivial critical orbit $\nu\neq 0$, the classical action for the critical orbit $\gamma=2\pi i n\sigma_3$ is $S_\gamma={16\pi^2n^2\ov \be}$ and the fermionic integral gives
$$
 \int {\mathrm{Pf}(D\pa^{2}_\tau) \ov 
 \sqrt{\det(D^{2}\pa^{2}_{\tau}-\mathrm{ad}_{\dot\chi}P\mathrm{ad}_{\dot\chi} D)} } e^{-{1\ov 2}\int \ex{\chi,\dot \chi}d\tau }d \chi={\be-32\pi^2 n^2 \ov 2\be^{5/2}}.
$$
Together with the volume factor $V_G=32\sqrt{2}\pi^2$ and $\text{vol}(\mathcal O_{\gamma\neq 0})=8\pi$,  we get the expression for the localization side
$$
{16\sqrt{\pi} \ov \be^{3/2}}+\sum_{n=1}^{\infty}{32 \sqrt{\pi}(\be-32n^2\pi^2 )\ov\be^{5/2} } e^{-{16\pi^2n^2\ov \be}} =\sum_{n=-\infty}^{\infty}{16\sqrt{\pi}(\be-32n^2\pi^2 )\ov\be^{5/2} } e^{-{16\pi^2n^2\ov \be}}
$$
This expression  matches precisely the spectral side, which can be shown using the Poisson summation formula.
\end{remark}
\begin{appendix}
\section{Semi-simple Lie groups and algebras} \label{appsec:g}
Here we present, in a succinct form, basic necessary facts on semi-simple Lie groups and algebras (see \cite{helgason1979differential, kirillov2008introduction} for details and proofs).

\subsection{Basic facts on $\frak{g}$}  \label{appsec:g.1}  Let $\frak{g}$ be a semi-simple Lie algebra of dimension $n$, let $\frak{g}^{*}$ be the dual vector space to $\frak{g}$, and let $B$ be the Killing form on $\frak{g}$,
$$B(u,v)=\Tr(\mathrm{ad}_{u}\circ\mathrm{ad}_{v}),\quad u,v\in \frak{g}.$$ 

If the Killing form $B$ is negative-definite, the semi-simple Lie algebra $\frak{g}$ is called compact. Let $\frak{g}$ be compact Lie algebra, $\frak{t}$ be its
Cartan subalgebra of dimension $r$, and let $\frak{g}_{\CC}= \frak{g}\otimes_{\RR} \CC$ be it complexification, a complex semi-simple Lie algebra with the Cartan subalgebra $\frak{t}_{\CC}= \frak{t}\otimes_{\RR} \CC$.

It will be convenient to denote by $\la~,~\ra$ the Euclidean inner product on real vector space $i\frak{t}$, defined by the Killing form, and to use the same notation for the induce inner product the $\RR$-dual vector space $i\frak{t}^{*}$.
Moreover, we will use $\la~,~\ra$ to denote the natural pairing between $i\frak{t}^{*}$ and $i\frak{t}$, $\la\al,h\ra=\al(h)$, where $\al\in i\frak{t}^{*}$ and $h\in i\frak{t}$. There will be no confusion since we will always
specify the spaces to which arguments of $\la~,~\ra$ belong.

Let $R$ be the root system for the pair $(\frak{g}_{\CC},\frak{t}_{\CC})$, so that
$$\frak{g}_{\CC}=\frak{t}_{\CC}\oplus\bigoplus_{\al\in R}\frak{g}_{\al},$$
where
$$\frak{g}_{\al}=\{x\in\frak{g}_{\mathbb C} : [h,x]=\al(h)x\;\;\text{for all}\;\; h\in\frak{t}_{\CC}\}$$
and 
$$R=\{\al\in i\frak{t}^{*} - \{0\}: \frak{g}_{\al}\neq 0\}.$$
For every root $\al\in R$ denote by $\al^{\vee}\in i\frak{t}$ the corresponding coroot, defined by
$$\la \al^{\vee},\be\ra=2\frac{\la\al,\be\ra}{\la\al,\al\ra}\quad\text{for all}\quad \be\in i\frak{t}^{*}.$$

A basis of $\frak{g}_{\CC}$ consisting of a basis of $\frak{t}_{\CC}$ and bases of $\frak{g}_{\al}$ is called Cartan-Weyl basis. Orthonormal Cartan-Weyl basis is given by  $H_{j}\in i\frak{t}$ and bases $E_{\al}$ of $\frak{g}_{\al}$, satisfying 
$$\la H_i,H_j\ra=\delta_{ij},\quad\la E_\al,E_{\be}\ra=\delta_{\al,-\be}$$
and having commutation relations
$$[H_i,E_\al]=\al(H_{i}) E_\al,\qquad [E_\al,E_{-\al}]=\sum_{i=1}^{r}\al(H_{i})H_{i},$$
and
 $$[E_\al,E_\be]= N_{\al,\be}E_{\al+\be}\quad\text{if}\quad 0\neq \al+\be\in R.$$ 
 The compact Lie algebra $\frak{g}$ is the following $\RR$-linear subspace of $\frak{g}_{\CC}$
 \begin{equation}\label{g-real}
 \frak{g}=\bigoplus_{j=1}^{r}\RR\cdot iH_{j}\oplus\bigoplus_{\al\in R_+}\RR\cdot(E_{\al}-E_{-\al})\oplus\bigoplus_{\al\in R_+}\RR\cdot i(E_{\al}+ E_{-\al}),
 \end{equation}
 since restriction of the Cartan-Killing form to it is negative definite.

A choice of element $h\in\frak{t}$ with $\al(h)\neq 0$ for all $\al\in R$ determines the set $R_{+}$ of positive roots.
The set of simple roots $S\subset R_{+}$ generates the root lattice $Q$ in $i\frak{t}^{*}$ and the corresponding set of coroots generates the coroot lattice $Q^{\vee}$ in $i\frak{t}$. The weight lattice $P$ in $i\frak{t}^{*}$ is the dual lattice to $Q^{\vee}$, and a
weight $\lambda\in P$ is called dominant if $\la\lambda,\al\ra>0$ for all $\al\in R_{+}$.  The set of dominant weights $P_{+}$ is isomorphic to $\mathrm{Irrep}\,G$, and  
for $\lambda\in P_{+}$ we have
$$C_{2}(\lambda)=\la\lambda +\rho,\lambda+\rho\ra -\la\rho,\rho\ra,\quad\text{where}\quad \rho=\frac{1}{2}\sum_{\al\in R_{+}}\al\in P_{+}.$$

Finally, an element $h\in \mathfrak t$ is called regular, if $\al (h)\notin  2\pi i \mathbb Z $ for all $\al\in R$. The following statement is crucial for the derivation of the Eskin trace formula on $G$. 

\begin{lemma} \label{lemma:exp}
If $X\in \mathfrak t$ is a regular element and  $Y\in \mathfrak g$ satisfies $e^X=e^Y$, then $Y\in \mathfrak t$.
\end{lemma}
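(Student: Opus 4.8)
The plan is to avoid any conjugacy or centralizer-as-a-group argument and instead show directly that $Y$ lies in the $+1$ eigenspace of $\mathrm{Ad}(e^X)$ acting on $\frak g$, and then to identify that eigenspace with $\frak t$ using the root space decomposition together with the regularity of $X$.

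First I would record the elementary fact that $\mathrm{Ad}(\exp Y)Y=Y$ for every $Y\in\frak g$, which is immediate from $\mathrm{Ad}(\exp Y)=e^{\mathrm{ad}\,Y}$ and $\mathrm{ad}\,Y(Y)=[Y,Y]=0$. Feeding in the hypothesis $e^X=e^Y$, this becomes $\mathrm{Ad}(e^X)Y=Y$, so it suffices to prove that the fixed subspace $\{Z\in\frak g:\mathrm{Ad}(e^X)Z=Z\}$ equals $\frak t$.

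For the latter, I would pass to the complexification $\frak g_{\CC}=\frak t_{\CC}\oplus\bigoplus_{\al\in R}\frak g_{\al}$. On $\frak t_{\CC}$ the operator $\mathrm{ad}\,X$ vanishes, and on $\frak g_{\al}$ it acts by the scalar $\al(X)$, so $\mathrm{Ad}(e^X)=e^{\mathrm{ad}\,X}$ is the identity on $\frak t_{\CC}$ and is multiplication by $e^{\al(X)}$ on $\frak g_{\al}$. Regularity of $X$ says precisely that $\al(X)\notin 2\pi i\ZZ$ for every $\al\in R$, hence $e^{\al(X)}\neq 1$, so the $+1$ eigenspace of $\mathrm{Ad}(e^X)$ in $\frak g_{\CC}$ is exactly $\frak t_{\CC}$; intersecting with the real form $\frak g$ gives $\frak t$. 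Therefore $Y\in\frak t$, as claimed.

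I do not expect a genuine obstacle here; the only point that needs a word of care is the translation between the paper's normalization of ``regular'' ($\al(h)\notin 2\pi i\ZZ$) and the condition $e^{\al(X)}\neq 1$, which is immediate since $X\in\frak t$ forces $\al(X)$ to be purely imaginary. One should also note that the centralizer $Z_G(e^X)$ need not be connected, but this is irrelevant to the argument above, which works entirely at the level of the $\frak g$-action of $\mathrm{Ad}(e^X)$ and never invokes the centralizer as a group.
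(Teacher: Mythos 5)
Your proof is correct, and it takes a genuinely different route from the paper's. The paper argues by contradiction: assuming $[X,Y]\neq 0$, the curve $s\mapsto \mathrm{Ad}(e^{sY})X$ produces a non-constant family of preimages of $e^X$ under $\exp$, which forces $d\exp_X$ to be singular; the formula $d\exp_X = e^X\,\tfrac{1-e^{\mathrm{ad}\,X}}{\mathrm{ad}\,X}$ then shows that $\mathrm{ad}\,X$ must have an eigenvalue in $2\pi i\ZZ\setminus\{0\}$, contradicting regularity, so $[X,Y]=0$, and since $X$ is regular its centralizer in $\frak g$ is $\frak t$. You instead observe at the outset that $\mathrm{Ad}(e^X)Y=\mathrm{Ad}(e^Y)Y=Y$, and then identify the $+1$ eigenspace of $e^{\mathrm{ad}\,X}$ with $\frak t$ directly from the root space decomposition. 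Both arguments turn on the same spectral fact about $\mathrm{ad}\,X$ (its nonzero eigenvalues, the $\al(X)$, avoid $2\pi i\ZZ$), but yours sidesteps the formula for $d\exp$, the local-injectivity discussion, and the proof by contradiction, yielding a shorter and arguably cleaner derivation.
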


\begin{proof} Indeed, consider a spectral decomposition of the operator $\text{ad}_{X}$, $X\in\frak{t}$, on the vector space $\frak{g}_{\CC}$. Invariant subspaces of $\mathrm{ad}_{X}$
are $\frak{t}_{\CC}$ with the eigenvalue $0$, and $\frak{g}_{\al}$ with the eigenvalues $\al(X)=\la\al,X\ra$, $\al\in R$. Using the formula $\mathrm{Ad}_{e^{u}}=e^{\mathrm{ad}_{u}}$ for $u\in\frak{g}$, we see that the operator $\text{Ad}_{e^{X}}$ has the eigenvalues $1$ and $e^{\al(X)}$. For regular $X$ we have $\al(X)\notin 2\pi i\ZZ$ for all $\al\in R$, so the eigenspace of  $\mathrm{Ad}_{e^{X}}$ corresponding to the eigenvalue $1$ is $\frak{t}_{\CC}$. Since $e^{X}=e^{Y}$ and $\text{Ad}_{e^{Y}}(Y)=Y$, we obtain that $Y\in\frak{t}$.
\end{proof}

%
%
%
%
%

\subsection{Basic facts on $G$}\label{a-2}
Let $G$ be connected semi-simple Lie group of dimension $n$, $\frak{g}$ be its Lie algebra, and let $B$ be the Killing form on $\frak{g}$.
Denote by $L_{g}:G\to G$ and $R_{g}:G\to G$ corresponding left and right translations, $L_{g}h=gh$ and $R_{g}h=hg$, $h\in G$, and denote by
$\theta$ and $\tilde\theta$ corresponding left and right invariant Maurer-Cartan forms on $G$,
$$\theta_{g}(v)=(L_{g^{-1}})_{*}v\quad\text{and}\quad \tilde\theta_{g}(v)=(R_{g^{-1}})_{*}v,\quad v\in T_{g}G.$$
They satisfy the Maurer-Cartan equations
\beq\label{M-C}
d\theta+\frac{1}{2}[\theta,\theta]=0\quad\text{and}\quad d\tilde\theta-\frac{1}{2}[\tilde\theta,\tilde\theta]=0.
\eeq

It is convenient to use Cartan method of moving frames (also called a tetrad formalism in general relativity). Namely,
let $T_{1},\dots,T_{n}$ be a basis of $\frak{g}$, 
$$[T_{a},T_{b}]=f^{c}_{ab}T_{c},$$ 
where $f^{c}_{ab}$ are the structure constants, and let $\bm{x}=(x^{1},\dots,x^{n})$ be local coordinates in the neighborhood of $g\in G$. We have
$$\theta=\theta^a_\mu(\bm{x}) T_{a} dx^\mu\quad\text{and}\quad \tilde\theta=\tilde\theta^a_\mu(\bm{x}) T_{a} dx^\mu,$$
so corresponding bases of left-invariant and right-invariant vector fields on $G$ are given by 
\beq\label{L-R-0}
e_{a}=\theta^{\mu}_{a}(\bm{x})\pa_{\mu}\quad\text{and}\quad  f_{a}=\tilde\theta^{\mu}_{a}(\bm{x})\pa_{\mu},\quad\text{where}\quad \pa_{\mu}=\frac{\pa}{\pa x^{\mu}}.
\eeq
Here $\theta^{\mu}_{a}(\bm{x})$ and  $\tilde\theta^{\mu}_{a}(\bm{x})$ are inverse matrices to  $\theta^{a}_{\mu}(\bm{x})$ and  $\tilde\theta^{a}_{\mu}(\bm{x})$. 
For $v=v^{\mu}\pa_{\mu}\in T_{g}G$ we put $v_{L}=\theta_{g}(v)\in \frak{g}$ and $v_{R}=\tilde\theta_{g}(v)\in\frak{g}$, so
$$v_{L}=v_{L}^{a} T_{a}\quad\text{and}\quad v_{R}=v_{R}^{a}T_{a},\quad\text{where}\quad v_{L}^{a}=\theta^{a}_{\mu}(\bm{x})v^{\mu}\quad\text{and}\quad  v_{R}^{a}=\tilde\theta^{a}_{\mu}(\bm{x})v^{\mu}.$$
Equivalently, $v=v_{L}^{a}e_{a}=v_{R}^{a}f_{a}$.

For compact $G$ the bilinear form $-B$ is positive-definite and determines a bi-invariant Riemannian metric $(~,~)$ on $G$, the Cartan-Killing metric
\beq\label{Riemann}
(u,v)_{g}=-B(\theta_{g}(u),\theta_{g}(v)),\quad u,v\in T_{g}G.
\eeq
In local coordinates we have
$$(u,v)_{g}=g_{\mu\nu}(\bm{x})u^{\mu}v^{\nu},\quad\text{where}\; v=v^{\mu}\pa_{\mu},\quad u=u^{\mu}\pa_{\mu}, $$
and it follows from \eqref{Riemann} that
\beq\label{tetrad}
g_{\mu\nu}(\bm{x})=g_{ab}\theta^{a}_{\mu}(\bm{x})\theta^{b}_{\nu}(\bm{x}),\quad\text{where}\quad g_{ab}=-B (T_{a},T_{b}). 
\eeq

\subsection{Coadjoint orbits of $LG$} \label{A3} Let $LG$ be the loop group of $G$ and $\widehat{L\frak{g}}=L\frak{g}\oplus\RR$ be the central extension of the Lie algebra $L\frak{g}$ by the $2$-cocycle
$$c(X,Y)=\frac{1}{\beta}\int_{0}^{\be}\la \dot{X}(\tau),Y(\tau)\ra d\tau.$$
The adjoint action of $LG$ on $\widehat{L\frak{g}}$ is given by
$$\mathrm{Ad}_{g}(X,\al)=\left(\mathrm{Ad}_{g}(X), \al + \frac{1}{\beta}\int_{0}^{\be}\la J(\tau),X(\tau)\ra d\tau\right),\quad J=J(g)=g^{-1}\dot{g}.$$

The non-degenerate bilinear form on $\widehat{L\frak{g}}$,
$$\la(X_{1},\al_{1}), (X_{2},\al_{2})\ra=\frac{1}{\be}\int_{0}^{\be}\la X_{1}(\tau), X_{2}(\tau)\ra d\tau + \al_{1}\al_{2},$$
allows to identify $\widehat{L\frak{g}}^{*}$ with $\widehat{L\frak{g}}$. Under this identification, the coadjoint action of $LG$ on $\widehat{L\frak{g}}^{*}$ takes the form
\begin{equation}\label{co-ad}
\mathrm{Ad}^{*}_{g}(X,\al)=(\mathrm{Ad}_{g}(X) -\al\,\dot{g}g^{-1},\al),\quad (X,\al)\in \widehat{L\frak{g}}.
\end{equation}

The Lie algebra $L\frak{g}$ is naturally identified with the hyperplane $\al=1$ in $ \widehat{L\frak{g}}$, and
since $-\dot{g}g^{-1}=J(g^{-1})$,  it follows from \eqref{co-ad}  that the image of the map $J: LG\to L\frak{g}$ is a coadjoint orbit $\widehat{\mathcal{O}}_{1}$ of the element $(0,1)\in\widehat{L\frak{g}}$.

\section{Symplectic geometry of $T^{*}G$} \label{appsec:symp}
Here $G$ is a Lie group,  $\frak{g}$ is its Lie algebra, and $\frak{g}^{\ast}$ is  
the dual vector space to $\frak{g}$ with the natural pairing which is this section we denote by $(~,~): \frak{g}^{*}\times\frak{g}\to\RR$. 
\subsection{The symplectic form}
\begin{itemize}
\item The canonical $1$-form $\vartheta$ on $T^{*}G$ (the Liouville $1$-form on a cotangent bundle) is defined as follows. Let $\pi:T^{*}G\to G$ be the natural
projection and $\pi_{*}:T(T^{*}G)\to TG$ be its differential. Then for $\xi\in T_{(g,p)}T^{*}G$, where $g\in G$ and $p\in T^{*}_{g}G$, we have
\beq\label{L-one}
\vartheta(\xi)=p(v),\quad\text{where}\quad v=\pi_{*}(\xi)\in T_{g}G.
\eeq
\item The canonical symplectic form on $T^{*}G$ is defined by $\omega=d\vartheta$.

\item The left translations (and also the right translations)
trivialize the tangent and cotangent bundles to $G$, 
$$TG\simeq G\times\frak{g}\quad \text{and}\quad T^{\ast}G\simeq G\times\frak{g}^{\ast},$$ 
so putting  in \eqref{L-one} $p=(L^{*}_{g})^{-1}\alpha$, where $\alpha\in\frak{g}^{*}$, we obtain
\beq\label{L-two}
\vartheta(\xi)=(\alpha, (L_{g^{-1}})_{*}v).
\eeq
Using Maurer-Cartan form $\theta$, 
\eqref{L-two} can be written as
$$\vartheta_{(g, \alpha)}=(\alpha, \theta_{g}),$$
so using the Maurer-Cartan equation, we obtain a simple formula for the symplectic form $\omega$
\beq\label{omega-G}
\omega=(d\al,\theta)+(\al,d\theta)=(d\al,\theta)-\frac{1}{2}(\al,[\theta,\theta]).
\eeq
Explicitly, for $X_{1}=(v_{1},\al_{1}), X_{2}=(v_{2},\al_{2})\in T_{(g,\al)}T^{*}G\simeq T_{g}G\times\frak{g}^{*}$ 
 we have\footnote{Note that due to the definition of the exterior product of two $1$-forms, the factor $\frac{1}{2}$ in the Maurer-Cartan formula becomes $1$.}
\beq\label{omega-G-1}
\omega(X_{1},X_{2})=(\al_{1},\theta(v_{2}))-(\al_{2},\theta(v_{1}))-(\al,[\theta(v_{1}),\theta(v_{2})]).
\eeq
\end{itemize}
\subsection{Poisson brackets}
The Poisson bracket of  smooth functions on $T^{*}G$ is defined by
$$\{f_{1}, f_{2}\}=\omega(X_{f_{1}},X_{f_{2}}).$$
Here $X_{f}$ is the Hamiltonian vector field for a function $f$, defined by
$$i_{X_{f}}\omega=-df.$$
\begin{itemize}
\item For every $u\in\frak{g}$ consider the function $l_{u}$ on $T^{*}G$, defined by
$$l_{u}(g,\al)=(\al, u).$$
The corresponding Hamiltonian vector field $X_{l_{u}}$ at a point $(g,\al)\in T^{*}G$ is
$$X_{l_{u}}=((L_{g})_{*}u,\mathrm{ad}^{*}_{u}\al)\in T_{(g,\al)}T^{*}G,$$
where $\mathrm{ad}^{*}$ is the coadjoint action of $\frak{g}$,
$$(\mathrm{ad}^{*}_{u}\al,v)=(\al,\mathrm{ad}_{u}v)=(\al,[u,v]).$$
Indeed, $dl_{u}=u$ and for a vector field $Y=(v,\beta)$ on $T^{*}G$ we have
\begin{align*}
\omega(X_{l_{u}},Y) & =(\mathrm{ad}^{*}_{u}\al),\theta(v))-(\beta,\theta((L_{g})_{*}u))-(\al,[\theta((L_{g})_{*}u),\theta(v)])\\
& =(\mathrm{ad}^{*}_{u}\al),\theta(v))-(\beta, u)-(\al,[u,\theta(v)])\\
&=-(\beta,u)=-dl_{u}(Y).
\end{align*}
Then
\begin{align*}
\{l_{u_{1}},l_{u_{2}}\} &=\omega(X_{l_{u_{1}}},X_{l_{u_{2}}})=-(\mathrm{ad}^{*}_{u_{2}}\al,u_{1})=(\al,[u_{1},u_{2}]),
\end{align*}
so
\beq\label{PB-1}
\{l_{u_{1}},l_{u_{2}}\}=l_{[u_{1},u_{2}]},\quad u_{1}, u_{2}\in\frak{g}.
\eeq
\item Trivialization of $TG$ and $T^{*}G$ by the right translations leads to the functions $r_{u}$ on $T^{*}G$, defined by 
$$r_{u}(g,\al)=(\al, \mathrm{Ad}_{g^{-1}}u).$$
The corresponding Hamiltonian vector field is $X_{r_{u}}=((R_{g})_{*}u, 0)$, so
\beq\label{PB-1-right}
\{r_{u_{1}},r_{u_{2}}\}=-r_{[u_{1},u_{2}]}\quad\text{and}\quad\{l_{u_{1}}, r_{u_{2}}\}=0. 
\eeq
\item Another class of functions on $T^{*}G$ is given by smooth $f:G\to\RR$. It is easy to see that 
$$X_{f}=(0,-(L_{g})^{*}df).$$
Indeed, for $Y=(v,\beta)$ be have
\begin{align*}
\omega(X_{f},Y) & =-((L_{g})^{*}df,\theta(v))=-(df, (L_{g})_{*}\theta(v))=-(df,v),
\end{align*}
so such functions Poisson commute and 
\beq\label{PB-2}
\{l_{u},f\}=((L_{g})^{*}df,\theta((L_{g})_{*}u))=(df,(L_{g})_{*}u) =e_{u}(f).
\eeq
\end{itemize}

\section{Free particle on a Lie group}\label{free on G}
In local coordinates on $G$ we have
\beq\label{G-free}
\LL=\frac{1}{2}g_{\mu\nu}(\bm{x})\dot{x}^{\mu}\dot{x}^{\nu},
\eeq
where $g_{\mu\nu}(\bm{x})$ is given by \eqref{tetrad}. Consider left-invaraint conserved current $J=g^{-1}\dot{g}$ 
\beq\label{G-current}
J=J^{a}T_{a},\quad J^{a}=\theta^{a}_{\mu}\dot{x}^{\mu}\quad\text{and}\quad \dot{x}^{\mu}=\theta^{\mu}_{a}J^{a},
\eeq
so the Lagrangian \eqref{G-free} can be rewritten in terms of currents
$$\LL=\frac{1}{2}g_{ab}J^{a}J^{b}.$$
By the Legendre transform
$$p_{\mu}=\frac{\pa \LL}{\pa\dot{x}^{\mu}}=g_{\mu\nu}\dot{x}^{\nu}$$ 
so using $\dot{x}^{\mu}=g^{\mu\nu}p_{\nu}$ we have
\beq\label{H-Group}
H=p_{\mu}\dot{x}^{\mu}-\LL=\frac{1}{2}g^{\mu\nu}p_{\mu}p_{\nu}.
\eeq

Using \eqref{G-current}, we can rewrite Legendre transform in terms of $J$
$$p_{\mu}=g_{ab}\theta^{a}_{\mu}J^{b}=\theta^{a}_{\mu}J_{a},\quad J_{a}=g_{ab}J^{b}.$$
Note that $J^{a}$ are functions on $TG$, so introducing the functions $l_{a}$ on $T^{*}G$ by
\beq\label{e-functions}
l_{a}=\theta^{\mu}_{a}p_{\mu},
\eeq
we have
$$p_{\mu}\dot{x}^{\mu}=g^{ab}l_{a}l_{b}\quad\text{and}\quad H=\frac{1}{2}g^{ab}l_{a}l_{b}.$$
We see that $l_{a}$ are precisely the functions $l_{u}$ on $T^{*}G$, defined in the Appendix \ref{appsec:symp}, for $u=T_{a}$.
According to \eqref{PB-1}, they have the following Poisson brackets
\beq\label{PB-l}
\{l_{a}, l_{b}\}=f^{c}_{ab}l_{c}.
\eeq
One can do exactly same exercise with a right-invariant conserved current $\tilde J=\dot g g^{-1}$ with a similarly defined function $r_a=\tilde \theta ^\mu _a p_\mu$ on $T^* G$. Namely, it follows from
\eqref{PB-1-right} that
\beq \label{PB-II}
\{r_{a}, r_{b}\}=-f^{c}_{ab}r_{c}\quad\text{and}\quad\{r_{a},l_{b}\}=0.
\eeq
\end{appendix}

\bibliographystyle{amsplain}
\bibliography{Ref.bib}

\providecommand{\bysame}{\leavevmode\hbox to3em{\hrulefill}\thinspace}
\providecommand{\MR}{\relax\ifhmode\unskip\space\fi MR }
\providecommand{\MRhref}[2]{%
  \href{http://www.ams.org/mathscinet-getitem?mr=#1}{#2}
}
\providecommand{\href}[2]{#2}
\begin{thebibliography}{10}

\bibitem{alekseev2000non}
A.~Alekseev and E.~Meinrenken, \emph{The non-commutative {W}eil algebra},
  Invent. Math. \textbf{139} (2000), no.~1, 135--172.

\bibitem{Alvarez-Gaume:1983zxc}
Luis Alvarez-Gaum\'e, \emph{Supersymmetry and the {A}tiyah-{S}inger index
  theorem}, Comm. Math. Phys. \textbf{90} (1983), no.~2, 161--173.

\bibitem{atiyah1985circular}
M.~F. Atiyah, \emph{Circular symmetry and stationary-phase approximation},
  Ast\'erisque (1985), no.~131, 43--59.

\bibitem{Benini:2012ui}
Francesco Benini and Stefano Cremonesi, \emph{Partition functions of
  {$\mathcal{N}=(2,2)$} gauge theories on {$S^2$} and vortices}, Comm. Math.
  Phys. \textbf{334} (2015), no.~3, 1483--1527.

\bibitem{Benini:2013xpa}
Francesco Benini, Richard Eager, Kentaro Hori, and Yuji Tachikawa,
  \emph{Elliptic genera of 2d {$\mathcal{N}=2$} gauge theories}, Comm. Math.
  Phys. \textbf{333} (2015), no.~3, 1241--1286.

\bibitem{berline2003heat}
Nicole Berline, Ezra Getzler, and Mich\`ele Vergne, \emph{Heat kernels and
  {D}irac operators}, Springer-Verlag, Berlin, 1992.

\bibitem{bismut1989local}
Jean-Michel Bismut, \emph{A local index theorem for non-{K}\"ahler manifolds},
  Math. Ann. \textbf{284} (1989), no.~4, 681--699.

\bibitem{bismut2008hypoelliptic}
\bysame, \emph{The hypoelliptic {L}aplacian on the cotangent bundle}, J. Amer.
  Math. Soc. \textbf{18} (2005), no.~2, 379--476.

\bibitem{bismut2005hypoelliptic}
\bysame, \emph{The hypoelliptic {L}aplacian on a compact {L}ie group}, J.
  Funct. Anal. \textbf{255} (2008), no.~9, 2190--2232.

\bibitem{bismut2011hypoelliptic}
\bysame, \emph{Hypoelliptic {L}aplacian and orbital integrals}, Annals of
  Mathematics Studies, vol. 177, Princeton University Press, Princeton, NJ,
  2011.

\bibitem{Braden:1986zu}
H.~W. Braden, \emph{Sigma-models with torsion}, Ann. Physics \textbf{171}
  (1986), no.~2, 433--462.

\bibitem{Camporesi:1990wm}
Roberto Camporesi, \emph{Harmonic analysis and propagators on homogeneous
  spaces}, Phys. Rep. \textbf{196} (1990), no.~1-2, 1--134.

\bibitem{Cordes:1994fc}
Stefan Cordes, Gregory Moore, and Sanjaye Ramgoolam, \emph{Lectures on {$2$}d
  {Y}ang-{M}ills theory, equivariant cohomology and topological field
  theories}, Nucl. Phys. B Proc. Suppl. \textbf{41} (1995), 184--244.

\bibitem{Delmastro:2021xox}
Diego Delmastro, Davide Gaiotto, and Jaume Gomis, \emph{{Global anomalies on
  the Hilbert space}}, JHEP \textbf{11} (2021), 142.

\bibitem{dewitt1957dynamical}
Bryce~S. Dewitt, \emph{Dynamical theory in curved spaces. {I}. {A} review of
  the classical and quantum action principles}, Rev. Mod. Phys. \textbf{29}
  (1957), 377--397.

\bibitem{Doroud:2012xw}
Nima Doroud, Jaume Gomis, Bruno Le~Floch, and Sungjay Lee, \emph{Exact results
  in {$D=2$} supersymmetric gauge theories}, JHEP \textbf{05} (2013), 093.

\bibitem{dowker1970sum}
J.~S. Dowker, \emph{When is the ``sum over classical paths'' exact?}, J. Phys.
  A \textbf{3} (1970), no.~5, 451--461.

\bibitem{Dowker:1970vu}
\bysame, \emph{Quantum mechanics on group space and {H}uygens' principle}, Ann.
  Physics. \textbf{62} (1971), 361--382.

\bibitem{MR0206535}
L.~D. \`Eskin, \emph{Heat equation on {L}ie groups}, In {M}emoriam: {N}. {G}.
  \v{C}ebotarev ({R}ussian), Izdat. Kazan. Univ., Kazan, 1964, pp.~113--132.

\bibitem{Frenkel1984}
I.~B. Frenkel, \emph{Orbital theory for affine {L}ie algebras}, Invent. Math.
  \textbf{77} (1984), no.~2, 301--352.

\bibitem{gutzwiller1980classical}
M.~C. Gutzwiller, \emph{Classical quantization of a hamiltonian with ergodic
  behavior}, Phys. Rev.Lett. \textbf{45} (1980), no.~3, 150--153.

\bibitem{helgason1979differential}
Sigurdur Helgason, \emph{Differential geometry, {L}ie groups, and symmetric
  spaces}, Graduate Studies in Mathematics, vol.~34, American Mathematical
  Society, Providence, RI, 2001, Corrected reprint of the 1978 original.

\bibitem{jacobi1828}
C.~G.~J. Jacobi, \emph{Suite des notices sur les fonctions elliptiques}, J.
  Reine Angew. Math. \textbf{3} (1828), 303--310.

\bibitem{Kapustin:2009kz}
Anton Kapustin, Brian Willett, and Itamar Yaakov, \emph{{Exact results for
  {W}ilson loops in superconformal {C}hern-{S}imons theories with matter}},
  JHEP \textbf{03} (2010), 089.

\bibitem{kirillov2008introduction}
Alexander~A Kirillov, \emph{An introduction to lie groups and lie algebras},
  vol. 113, Cambridge University Press, 2008.

\bibitem{kostant1999cubic}
Bertram Kostant, \emph{A cubic {D}irac operator and the emergence of {E}uler
  number multiplets of representations for equal rank subgroups}, Duke Mat. J.
  \textbf{100} (1999), no.~3, 447--501.

\bibitem{Losev:1997tp}
A.~Losev, N.~Nekrasov, and S.~Shatashvili, \emph{{Issues in topological gauge
  theory}}, Nucl. Phys. B \textbf{534} (1998), no.~3, 549--611.

\bibitem{marinov1979dynamics}
M.~S. Marinov and M.~V. Terentyev, \emph{Dynamics on the group manifold and
  path integral}, Fortschr. Phys. \textbf{27} (1979), no.~11-12, 511--545.

\bibitem{Moore:1997dj}
Gregory Moore, Nikita Nekrasov, and Samson Shatashvili, \emph{{Integrating over
  Higgs branches}}, Comm. Math. Phys. \textbf{209} (2000), no.~1, 97--121.

\bibitem{Nekrasov:2002qd}
Nikita~A. Nekrasov, \emph{{Seiberg-Witten prepotential from instanton
  counting}}, Adv. Theor. Math. Phys. \textbf{7} (2003), no.~5, 831--864.

\bibitem{Pestun:2007rz}
Vasily Pestun, \emph{{Localization of gauge theory on a four-sphere and
  supersymmetric Wilson loops}}, Comm. Math. Phys. \textbf{313} (2012), no.~1,
  71--129.

\bibitem{Pestun:2016zxk}
Vasily Pestun, Maxim Zabzine, Francesco Benini, and et~al., \emph{{Localization
  techniques in quantum field theories}}, J. Phys. A \textbf{50} (2017),
  no.~44, 440301.

\bibitem{Picken:1988ev}
R.~F. Picken, \emph{The propagator for quantum mechanics on a group manifold
  from an infinite-dimensional analogue of the {D}uistermaat-{H}eckman
  integration formula}, J. Phys. A \textbf{22} (1989), no.~13, 2285--2297.

\bibitem{poisson1827}
S.-D. Poisson, \emph{Suite du m\'{e}moire sur les int\'{e}grales d\'{e}finies
  et sur la sommation des s\'{e}ries}, J. \'{E}c. Polytech. \textbf{12} (1823),
  404--509.

\bibitem{Rumpf:1981xh}
Helmut Rumpf, \emph{Supersymmetric {D}irac particles in {R}iemann-{C}artan
  space-time}, Gen. Relativity Gravitation \textbf{14} (1982), no.~9, 773--792.

\bibitem{Schulman:1968yv}
Lawrence Schulman, \emph{A path integral for spin}, Phys. Rev. (2) \textbf{176}
  (1968), 1558--1569.

\bibitem{selberg1956harmonic}
A.~Selberg, \emph{Harmonic analysis and discontinuous groups in weakly
  symmetric {R}iemannian spaces with applications to {D}irichlet series}, J.
  Indian Math. Soc. (N.S.) \textbf{20} (1956), no.~1-3, 47--87.

\bibitem{Stanford:2017thb}
Douglas Stanford and Edward Witten, \emph{Fermionic localization of the
  schwarzian theory}, JHEP, 008.

\bibitem{Witten:1982df}
Edward Witten, \emph{Constraints on supersymmetry breaking}, Nucl. Phys. B
  \textbf{202} (1982), no.~2, 253--316.

\bibitem{Witten:1982im}
\bysame, \emph{Supersymmetry and {M}orse theory}, J. Diff. Geom. \textbf{17}
  (1982), no.~4, 661--692.

\bibitem{Witten:1988ze}
\bysame, \emph{Topological quantum field theory}, Comm. Math. Phys.
  \textbf{117} (1988), 353--386.

\bibitem{Witten:1992xu}
\bysame, \emph{Two-dimensional gauge theories revisited}, J. Geom. Phys.
  \textbf{9} (1992), no.~4, 303--368.

\end{thebibliography}

\end{document}